\newtheorem{Theorem}{Theorem}[section] %
\newtheorem{Lemma}[Theorem]{Lemma} %
\newtheorem{thm}{Theorem}[section]
\newtheorem{rem}[Theorem]{Remark}
\newcommand{\set}{:=}
\DeclareMathOperator*{\Law}{Law}
\renewcommand{\mathcal}{\mathscr}
\newcommand{\cA}{\mathcal{A}}
\newcommand{\cE}{\mathcal{E}}
\newcommand{\cF}{\mathcal{F}}
\newcommand{\cT}{\mathcal{T}}
\newcommand{\DD}{\mathbb{D}}
\newcommand{\EE}{\mathbb{E}}
\newcommand{\HH}{\mathbb{H}}
\newcommand{\PP}{\mathbb{P}}
\newcommand{\QQ}{\mathbb{Q}}
\newcommand{\RR}{\mathbb{R}}
\newcommand{\TT}{\mathbb{T}}
\newcommand{\VV}{\mathbb{V}}
\renewcommand{\epsilon}{\varepsilon}
\newcommand{\supp}{\operatorname{\mathrm{supp}}}
\newcommand{\conv}{\operatorname{\mathrm{conv}}}
\newcommand{\cbr}[1]{\left\{ #1 \right\}}
\renewcommand{\P}{\PP}
\newcommand{\E}{\EE}
\begin{document}
\begin{frontmatter}
\title{Super-Replication with Fixed Transaction Costs}
 \runtitle{Super-Replication with Fixed Transaction Costs}
 \author{Peter Bank$^*$ and Yan Dolinsky$^\dagger$ \corref{}}

\address{P.Bank:
 Department of Mathematics, TU Berlin.
 e.mail: bank@math.tu-berlin.de\\
Y.Dolinsky Department of Statistics, Hebrew University
  and School of Mathematical Sciences, Monash University.
 e.mail:yan.dolinsky@mail.huji.ac.il}
 \affiliation{Hebrew University$^\dagger$ and Monash University$^\dagger$, and TU Berlin$^*$}
\runauthor{P.Bank and Y.Dolinsky}
\thankstext{}{Both authors are grateful to the Einstein Foundation for the financial support through its
research project on “Game options and markets with frictions”. The author YD is also partially supported by
Marie--Curie Career Integration Grant, no. 618235 and ISF Grant 160/17}
 \date{\today}
\begin{abstract}\noindent
 We study super-replication of contingent claims in markets with
  fixed transaction costs.  This can be viewed as a stochastic impulse
  control problem with a terminal state constraint. The first result
  in this paper reveals that in reasonable continuous time financial
  market models the super-replication price is prohibitively costly
  and leads to trivial buy-and-hold strategies.  Our second result
  derives nontrivial scaling limits of super-replication prices for
  binomial models with small fixed costs.
\end{abstract}
\begin{keyword}[class=MSC]
\kwd[Primary ]{91G10, 91G20}
\end{keyword}
\begin{keyword}
\kwd{binomial models} \kwd{conditional full support} \kwd{fixed transaction costs} \kwd{super--replication}
\end{keyword}
\end{frontmatter}
\section{Introduction}\label{sec:1}

This paper deals with super-replication of European options in a
market where trading incurs fixed transaction costs.  Most papers
dealing with fixed transaction costs explore the problem of optimal
portfolio choice (see for instance, \cite{AJS}, \cite{EH}, \cite{Ko},
\cite{LMW}, \cite{MP} and \cite{OP}). Much fewer papers (see
\cite{JKN} and \cite{LT}) discuss no arbitrage criteria for fixed
transaction costs and, to the best of our knowledge, the problem of
super--replicating a contingent claim with fixed costs has not been
considered in the literaure before.

By contrast, for the case of proportional transaction costs, the topic
of super-replication is widely studied.  In \cite{DS} it was
conjectured that, in the Black-Scholes model with proportional
transaction costs, the cheapest way to super-replicate a call option
is to buy one unit of stock right at the start and hold it until
maturity. This conjecture was proved by many authors (see e.g.,
\cite{CPT}, \cite{GRS}, \cite{JLR}, \cite{LS}, \cite{SSC} and for game
options in \cite{D1}).  A natural way to overcome this negative result
was proposed by Kusuoka in~\cite{K}.  He considered scaling limits of
the classical Cox-Ross-Rubinstein model of a complete binomial market
and showed that, when transaction costs are also rescaled properly,
super-replication prices converge to what is now known as a
$G$-expectation in the sense of Peng (\cite{P}).

The present paper is a first step in the development of the above
theory for the fixed transaction costs case.  The setup of fixed
transaction costs corresponds to the case where any (nonzero)
transaction incurs a fixed cost of $\kappa>0$, regardless of the
trading volume.  Clearly, this leads to discontinuous, non-convex
wealth dynamics which induce a stochastic impulse control problem with
a novel terminal state constraint. In particular, convex duality
methods, which played a key role in the theory of proportional
transaction costs (or their convex generalizations), are not available
here.

As a first result, we show in Theorem~\ref{thm:1} that, in a
continuous time financial market with a risky asset exhibiting
conditionally full support (see \cite{GRS}), the cheapest way
to super-replicate a convex option is again to apply a trivial buy-and-hold
strategy.  Hence, Theorem~\ref{thm:1} can be viewed as an analog for
fixed costs of the result in \cite{GRS} which was obtained for the
case of proportional transaction costs. By contrast to the classical
duality used in~\cite{GRS}, our proof uses the impulse control
structure directly.

The second result in the present paper deals with the limiting
behavior of super-replication prices in the Cox-Ross-Rubinstein
binomial models of \cite{CRR}.  Specifically, we consider a sequence
of binomial models with constant volatility and study the asymptotic
behavior of the super-replication prices for convex payoffs when the
time step goes to zero and the fixed transaction costs are scaled
linearly as a function of the time step. In Theorem~\ref{thm:2} we
characterize the scaling limit as a stochastic volatility control
problem defined on Wiener space.

Our proof relies heavily on the fact that the payoff of the European
option is a convex function of the risky asset.  Under this assumption
we derive a non-standard dual representation for super-replication
prices in the binomial models.  This representation allow us to obtain
the limit behavior of the super-replication prices by modifying ideas
from \cite{K}.  We emphasize that without the convexity condition on
the payoff the analysis is more complicated and remains an open
question.

Closely related is the topic of \emph{approximate hedging} which deals
with the construction of portfolio strategies with terminal wealth
close to the payoff of the derivative security. Approximate hedging in
the context of market frictions is going back to the pioneering work
of Leland \cite{L} who considers a Black--Scholes model with vanishing
proportional transaction costs.  This approach is studied
rigorously and extended (beyond Black--Scholes and beyond vanishing
proportional transaction costs) in \cite{CF,EL,Geiss,KS,L1,Pa}. The
triviality of super-replication prices established in our
Theorem~\ref{thm:1} can also be viewed as a motivation for the study
of approximate hedging in the fixed transaction costs setup.

The paper is organized as follows.  In Section~\ref{sec:2} we
formalize the super-replication problem with fixed
costs. Section~\ref{sec:buy-and-hold} shows that, in models with
conditional full support, trivial buy-and-hold strategies yield
optimal super-replications of convex payoffs.  In
Section~\ref{sec:binomial} we give the scaling limit of
super-replication prices with small fixed costs. The proof of this
result is prepared by a dual representation for our super-replication
prices discussed in Section~\ref{sec:duality} and accomplished in
Sections~\ref{sec:lowerbound} and~\ref{sec:upperbound} by using tools
from weak convergence of stochastic processes to analyze the
asymptotic behavior of the dual terms.

\section{Superreplication with fixed transaction costs}\label{sec:2}

Let $(\Omega,\cF,(\cF_t),\P)$ be a filtered probability space with a
progressively measurable process $S>0$ which we take to describe the
price evolution of some financial asset with initial price
$S_0=s_0>0$. The asset is traded at strictly positive fixed costs
$\kappa>0$ per transaction and so an investor with a bank account (that
for simplicity bears no interest) can change her position only finitely
often. We take $T=1$ to be the investor's time horizon and so the
times of intervention are given by a family of stopping times
$\TT=(\tau_i)_{i=1,2,\dots}$ such that
$$
0 = \tau_0 \leq \tau_1 \leq \tau_2 \leq \dots \leq T = 1  \text{ with } \tau_i <
\tau_{i+1} \text{ on } \cbr{\tau_i<1}.
$$
Let us denote by $\cT$ the class of all such families $\TT$ for which
the number of interventions by time $T=1$ is finite almost surely:
$$
N(\TT) \set \sup\cbr{i =0,1,\dots \;:\; \tau_i<1} < \infty \text{
  $\P$-a.s.}.
$$
Notice that for simplicity we do not count a possible initial
intervention at time $\tau_0=0$.

Assume our investor seeks to hedge an option with $\cF_1$-measurable
payoff $F\geq 0$ at maturity $T=1$ by an investment strategy
$(\TT,\HH)$ where $\HH=(h_i)_{i=0,1,\dots}$ describes the
$\cF_{\tau_i}$-measurable number of assets $h_i$ to be held,
respectively, over each period $(\tau_i,\tau_{i+1}]$,
$i=0,1,\dots$. Keeping in mind the fixed transaction costs $\kappa>0$
and the free trade at time 0, the investor's gains from trading will
by time $t \leq 1$ have accrued to
$$
G^{\kappa}(\TT,\HH)_t \set   \sum_{i=0,1,\dots}h_i(S_{\tau_{i+1} \wedge t}-S_{\tau_i \wedge t})
 - \kappa \sup\cbr{i=0,1,\dots \;:\; \tau_i < t}.
$$
To rule out the possibility of doubling strategies, the investor can
only use admissible strategies from the set
$$
\cA \set \cbr{(\TT,\HH) \;:\; G^{\kappa}(\TT,\HH) \text{ bounded from below by
    a constant $\P$-a.s.}}.
$$
The option's  super-replication price is then given by
$$
\VV^\kappa(F) \set \inf \cbr{x \in \RR \;:\; x+G^{\kappa}(\TT,\HH)_1
  \geq F \text{ $\P$-a.s.  for some } (\TT,\HH) \in \cA}.
$$
Determining this super-replication price amounts to solving an impulse
control problem with terminal state constraint, a task which cannot be
carried out explicitly without further assumptions. We will show
however that for convex payoffs it can be computed in models with
conditional full support (Section~\ref{sec:buy-and-hold}). At the
other end of the modeling spectrum, we consider binomial models
converging to a Black-Scholes dynamic, for which we compute the
scaling limit for suitably scaled fixed costs
(Section~\ref{sec:binomial}).
\begin{rem}
  In the frictionless case $\kappa=0$ with continuous stock prices,
  the above super-replication price is the classical one even given
  the constraint to an almost surely finite number of trades. This
  follows readily from the fact that the wealth process of any
  continuous-time trading strategy can be approximated uniformly (in
  time and almost all scenarios) by piecewise constant (admissible)
  trading strategies (Lemma A.3 in \cite{LS}).
\end{rem}

\section{Buy-and-hold with conditional full support}\label{sec:buy-and-hold}

In this section, we consider a continuous model
$S=(S_t)_{t \in [0,1]}$ exhibiting \emph{conditional full support} as
discussed by, e.g., \cite{GRS}:
\begin{equation}\label{eq:10}
\supp \mathbb P[S|_{[t,1]} \in \cdot |\mathcal F_t]=C^{+}_{S_t}[t,1]
\mbox{  $\P$-a.s. for any } t \in [0,1],
\end{equation}
where, for $y \geq 0$, $C^{+}_y[t,1]$ denotes the space of all continuous
paths $[t,1]\rightarrow\mathbb{R}_{+}$ starting in $y$ at time $t$.

\begin{thm}\label{thm:1}
  For any financial model exhibiting conditional full support in the
  sense of~\eqref{eq:10}, the super-replication price with fixed
  transaction costs $\kappa>0$ of any convex payoff $F=f(S_1)$ with
  $f:[0,\infty) \to \RR$ continuous and convex is
  \begin{equation}
  \VV^{\kappa}(f(S_1)) = f(0)+s_0 f'(\infty) \text{ where } f'(\infty) \set
  \sup_{s>0} f'(s).\label{eq:6}
  \end{equation}
  In case $f'(\infty)<\infty$, a super-hedge with initial capital
  $\VV^{\kappa}(f(S_1))$ is to buy $h_0\set f'(\infty)$ units of the asset
  at time $\tau_0=0$ and hold these until $T=1$.
\end{thm}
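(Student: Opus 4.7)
I would verify the proposed buy-and-hold strategy by direct computation. With $h_0 = f'(\infty)$ (assumed finite) and no further intervention, so $N = 0$, the terminal wealth equals $x + f'(\infty)(S_1 - s_0)$. Convexity of $f$ together with $f'(\infty) = \sup_{s>0} f'(s)$ makes $s \mapsto f(s) - f'(\infty) s$ non-increasing on $[0,\infty)$, so its maximum is $f(0)$; hence $f(S_1) \leq f(0) + f'(\infty) S_1$, and the terminal wealth dominates $f(S_1)$ whenever $x = f(0) + s_0 f'(\infty)$. Admissibility follows from $G^{\kappa}_t = f'(\infty)(S_t - s_0) \geq -s_0 f'(\infty)$.

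\textbf{Lower bound.} For any admissible super-replicating $(\TT,\HH) \in \cA$, the plan is a backward induction along the almost surely finite sequence of intervention times. Writing $V^{+}_{\tau_i}$ for the post-trade wealth at $\tau_i$, we have $V^{+}_0 = x$ and the recursion $V^{+}_{\tau_{i+1}} = V^{+}_{\tau_i} + h_i(S_{\tau_{i+1}} - S_{\tau_i}) - \kappa$. The base step analyses the last intervention $\tau_N$: on $\{N=n\}$ the strategy reduces to buy-and-hold with $h_n$ shares on $[\tau_n,1]$, so super-replication reads $V^{+}_{\tau_n} + h_n(S_1 - S_{\tau_n}) \geq f(S_1)$ $\P$-a.s. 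Conditioning on $\cF_{\tau_n}$ and invoking~\eqref{eq:10} to endow $S_1 | \cF_{\tau_n}$ with support $(0,\infty)$, continuity in $s$ extends the inequality to every $s \in (0,\infty)$. Letting $s \to \infty$ forces $h_n \geq f'(\infty)$, letting $s \to 0$ yields $V^{+}_{\tau_n} - h_n S_{\tau_n} \geq f(0)$, and together $V^{+}_{\tau_n} \geq f(0) + f'(\infty) S_{\tau_n}$.

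The induction step propagates this bound. Assuming $V^{+}_{\tau_{i+1}} \geq f(0) + f'(\infty) S_{\tau_{i+1}}$, the wealth recursion rearranges to $V^{+}_{\tau_i} - h_i S_{\tau_i} - \kappa - f(0) \geq (f'(\infty) - h_i) S_{\tau_{i+1}}$. Letting $S_{\tau_{i+1}}$ range over $(0,\infty)$ via~\eqref{eq:10} at $\tau_i$ forces $h_i \geq f'(\infty)$ and then $V^{+}_{\tau_i} \geq f(0) + f'(\infty) S_{\tau_i} + \kappa$. Iterating down to $i=0$ gives $x \geq f(0) + s_0 f'(\infty) + n \kappa$ on $\{N=n\}$ for every $n$, hence on all of $\Omega$. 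In the case $f'(\infty) = \infty$ the base step would already require $h_n = \infty$, which is impossible, so no admissible strategy super-replicates and $\VV^{\kappa}(f(S_1)) = +\infty$.

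\textbf{Main obstacle.} The delicate point is the rigorous invocation of~\eqref{eq:10} at each step. The event $\{N = n\}$ on which the buy-and-hold reduction is valid at $\tau_n$ is $\cF_1$-measurable rather than $\cF_{\tau_n}$-measurable, and the next intervention time $\tau_{i+1}$ is chosen by the investor (it could, for example, be a hitting time), so that naive conditioning on $\{N=n\}$ or on $\{\tau_{i+1}<1\}$ can shrink the conditional support of $S_1$ or of $S_{\tau_{i+1}}$ away from $(0,\infty)$. Resolving this calls for a careful pathwise/measurability analysis - a dichotomy between $\{N = i\}$, where the base-step argument is applied directly at $\tau_i$, and $\{N \geq i+1\}$, where the inductive hypothesis yields the extra $\kappa$ - combined so as to preserve the full support information on every scenario compatible with~\eqref{eq:10}.
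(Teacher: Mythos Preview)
Your upper bound matches the paper's. The lower bound, however, has a genuine gap precisely where you flag it, and the fix you sketch (splitting into $\{N=i\}$ and $\{N\geq i+1\}$) does not close it. The difficulty is structural: conditional full support at $\tau_i$ tells you only that the \emph{path} $S|_{[\tau_i,1]}$ has full support given $\cF_{\tau_i}$, but both $\{N=n\}=\{\tau_n<1,\tau_{n+1}=1\}$ and $\{\tau_{i+1}<1\}$ are events determined by this very path. Conditioning on them can genuinely shrink the support of $S_1$ or $S_{\tau_{i+1}}$ away from $(0,\infty)$; for instance, if $\tau_{i+1}$ is a level-hitting time of some $L>S_{\tau_i}$, then $S_{\tau_{i+1}}=L$ deterministically on $\{\tau_{i+1}<1\}$, so you cannot ``let $S_{\tau_{i+1}}$ range over $(0,\infty)$''. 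Your backward induction therefore never gets off the ground rigorously, and the strong conclusion $x\geq f(0)+s_0f'(\infty)+n\kappa$ on $\{N=n\}$ (which would force every super-replicating strategy with initial capital $x_0$ to be buy-and-hold) remains unproven.

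The paper circumvents this by running the induction \emph{forward} and trading the pointwise backward estimate for a positive-probability statement. For $x<x_0$ one writes $x=f(0)+\beta s_0$ with $\beta<f'(\infty)$, picks $\delta>0$ so that the affine line $f(0)+\beta s-\kappa$ lies strictly below $f$ for $s<\delta$ and $s>1/\delta$, and shows inductively that the events
\[
A_n=\{\tau_n<1,\;S_{\tau_n}<2/\delta,\;x+G_{\tau_n}<f(0)+\beta S_{\tau_n}-n\kappa/2\}
\]
all satisfy $\P[A_n]>0$. The induction step conditions on $\cF_{\tau_n}$ and uses CFS to produce, with positive conditional probability on $A_n$, a path event $B_n$ on which $S$ stays in $(0,2/\delta)$ and the increment $(h_n-\beta)(S_t-S_{\tau_n})$ never exceeds $\kappa/2$. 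On $A_n\cap B_n$ one argues by contradiction that $\tau_{n+1}=1$ must hold (else $A_{n+1}$ would have positive probability), and then CFS again yields, still with positive conditional probability, a terminal value $S_1$ near $0$ (if $h_n<\beta$) or near $1/\delta$ (if $h_n\geq\beta$) at which the super-replication inequality fails. This contradiction forces $\P[A_{n+1}]>0$. Since $x+G_{\tau_n}$ is then unbounded from below on events of positive probability, the strategy is inadmissible. The point is that CFS is applied to \emph{open path sets in $C^+_{S_{\tau_n}}[\tau_n,1]$} rather than to the value $S_{\tau_{i+1}}$ at an investor-chosen stopping time, which sidesteps exactly the measurability obstacle that blocks your backward scheme.
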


\begin{proof}
That the right-hand side of~\eqref{eq:6} is sufficient for
super-replication is trivial if $f'(\infty)=\infty$. If
$f'(\infty)<\infty$, we can consider the described buy-and-hold
strategy which yields
$$
G^{\kappa}(\TT,\HH)_1=f'(\infty)(S_1-S_0) \geq
f'_-(S_1)S_1-f'(\infty)S_0 \geq f(S_1)-f(0)-f'(\infty)S_0
$$
where both estimates are due to the convexity of $f$. This shows that
$x_0 \set f(0)+f'(\infty)S_0$ is enough to super-replicate $F=f(S_1)$.

Now consider $x<x_0$ and take a strategy $(\TT,\HH)$ with gains
process $G \set G^{\kappa}(\TT,\HH)$ such that $x+G_1 \geq
F=f(S_1)$. We will show that such a strategy cannot be
admissible. Specifically, with $\beta<f'(\infty)$ such that
$x=f(0)+\beta S_0$, we will argue that
$$
A_n \set \cbr{\tau_n<1, \, S_{\tau_n}<2/\delta, \,
  x+G_{\tau_n}<f(0)+\beta S_{\tau_n}-n \kappa/2}
$$
has positive probability for all $n=1,2,\dots$, where,
$\delta \in (0,1/s_0)$ is chosen small enough to ensure
$$
f(0)+\beta s - \kappa < f(s) \text{ for all } s<\delta \text{ and all
} s>1/\delta.
$$
Such a choice of $\delta$ is possible since $f$ is continuous at zero
and convex on $[0,\infty)$ with $f'(\infty)>\beta$. Since $\kappa>0$,
it then follows that $x+G_{\tau_n}<f(0)+2\beta/\delta-n\kappa/2$ on
the set $A_n$ with positive probability, $n=1,2,\dots$, and so
$G=G^\kappa(\TT,\HH)$ is not bounded from below by a constant and
$(\TT,\HH)$ cannot be admissible.

We will prove $\P[A_n]>0$, $n=0,1,\dots$, by induction. By our choices
of $\delta<1/s_0$ and of $\beta$, we even have $\P[A_0]=1$. Now
assume, by way of contradiction, that $\P[A_n]>0$, but $\P[A_{n+1}]=0$
for some $n$. Observe that on $A_n \cap \cbr{\tau_{n+1}<1}$ we can
estimate
\begin{align}\nonumber
  x+G_{\tau_{n+1}}& =
                    x+G_{\tau_n}+\beta(S_{\tau_{n+1}}-S_{\tau_n})+(h_n-\beta)(S_{\tau_{n+1}}-S_{\tau_n}) -\kappa\\
\label{eq:110}
&< f(0)+\beta S_{\tau_{n+1}}-(n+1)
  \kappa/2+(h_n-\beta)(S_{\tau_{n+1}}-S_{\tau_n}) - \kappa/2.
\end{align}
Hence, $A_{n+1}$ contains the set $A_n \cap \cbr{\tau_{n+1}<1} \cap B_{n}$
where
$$
B_{n} \set \cbr{\sup_{\tau_n \leq t \leq 1} S_t<2/\delta, \
  \sup_{\tau_n\leq t \leq 1}\cbr{
  (h_n-\beta)(S_{t}-S_{\tau_n})}\leq \kappa/2 }.
$$
Notice that $\tau_{n+1}=1$ must hold almost surely on $A_n \cap
B_n$ since, with $\P[A_{n+1}]=0$, we also have
$$
0  = \P[A_n \cap \cbr{\tau_{n+1}<1} \cap B_n]=\P[A_n
\cap B_n]-\P[A_n \cap \cbr{\tau_{n+1}=1} \cap B_n].
$$

Now, $A_n \cap B_n$ contains $A_n \cap \cbr{h_n \geq \beta} \cap C_n$ where
$$
C_n \set B_n \cap \cbr{S_1 \geq S_{\tau_n} \vee 1/\delta}.
$$
On $A_n \cap \cbr{h_n \geq \beta} \cap C_n$, however, the
super-replication property is violated since, on this set, we have
$\tau_{n+1}=1$ almost surely and estimate~\eqref{eq:110} gives
$$
x+G_1 = x+G_{\tau_{n+1}} < f(0)+\beta S_{1}-(n+1)\kappa/2<f(S_1)
$$
by choice of $\delta$ and definition of $C_n$. Hence, we deduce
$$
0 =\P[ A_n \cap \cbr{h_n \geq \beta} \cap C_n ]=\E\left[1_{A_n \cap \cbr{h_n \geq \beta}} \P[ C_n \;|\;\cF_{\tau_n}]\right].
$$
As the conditional full support property~\eqref{eq:10} holds also
at stopping times when it holds at deterministic times (see
Lemma~2.9 in~\cite{GRS}), we have $\P[C_n \;|\; \cF_{\tau_n}]>0$ almost
surely on $A_n \cap \cbr{h_n \geq \beta}$. So
the above identity yields that in fact $\P[A_n \cap \cbr{h_n \geq \beta}]=0$.

Similarly we will argue next that $\P[A_n \cap \cbr{h_n < \beta}]=0$
so that in conjunction with $\P[A_n \cap \cbr{h_n \geq \beta}]=0$ we
arrive at the contradiction $\P[A_n]=0$, completing our proof. Thus,
let us first observe that $A_n \cap B_n$ contains
$A_n \cap \cbr{h_n < \beta} \cap \tilde{C}_n$ where
$$
\tilde{C}_n \set B_n \cap \cbr{S_1 \leq S_{\tau_n} \wedge \delta}.
$$
Up to a $\P$-null set, however, we still have
$A_n \cap \cbr{h_n < \beta} \cap \tilde{C}_n \subset
\cbr{\tau_{n+1}=1}$ and the super-replication property is again
violated since, on this set, estimate~\eqref{eq:110} gives
$$
x+G_1 = x+G_{\tau_{n+1}} < f(0)+\beta S_{1}-(n+1)\kappa/2<f(S_1)
$$
by choice of $\delta$ and definition of $\tilde{C}_n$. Observing that
also $\P[\tilde{C}_n \;|\;\cF_{\tau_n}]>0$ almost surely on
$A_n \cap \cbr{h_n < \beta} \in \cF_{\tau_n}$ allows us to deduce by
the same reasoning as used for $C_n$ that indeed
$\P[A_n \cap \cbr{h_n < \beta}]=0$.
\end{proof}

\begin{rem}
  If we restrict ourselves to admissible strategies, the conditional
  full support property~\eqref{eq:10} guarantees absence of arbitrage
  (as it also does for proportional transaction costs; see
  \cite{G,GRS}). Indeed, assume that for a trading strategy
  $(\TT,\HH)$ we have $G^{\kappa}(\TT,\HH)_1\geq 0$ $\mathbb
  P$-a.s. and $\mathbb P(G^{\kappa}(\TT,\HH)_1> 0)>0$. Then, similarly
  to the above proof, we can argue by induction that for any
  $n=1,2,\dots$,
  $\mathbb P (\tau_n<1, G^{\kappa}(\TT,\HH)_{\tau_n}<-n\kappa/2)>0$
  and, thus, the gain process $G^{\kappa}(\TT,\HH)_t$, $t\geq 0$ is not
  uniformly bounded from below. So $(\TT,\HH)$ is not admissible. For
  more refined no arbitrage criteria we refer to \cite{JKN, LT}.
\end{rem}
\section{Scaling limit of binomial superreplication prices}\label{sec:binomial}

In this section we consider binomial Cox-Ross-Rubinstein models with
fixed transaction costs and describe the scaling limit of
superreplication prices for convex claims. To wit, we let
$\Omega=\cbr{-1,+1}^{\mathbb N_0}$, put $\zeta_i(\omega)=\omega_i$ for
$\omega = (\omega_0,\omega_1,\dots) \in \Omega$ and let $\P$ be the
measure under which the $\zeta_i$ are i.i.d. with
$\P[\zeta_i=1]=1/2$. The $n$-period binomial price process can now be
specified as
\begin{equation}\label{eq:112}
S^{(n)}_{t}=s_0\exp\left({ \frac{\sigma}{\sqrt n}}\sum_{i=1}^{[nt]}
  \zeta_i\right), \quad  t\in [0,T],
\end{equation}
and the underlying filtration $(\cF^{(n)}_t)$ is the one generated by
$S^{(n)}$.

Obviously, when considered under their respective equivalent
martingale measures $\P^{(n)}\approx \P$, these Cox-Ross-Rubinstein
models $S^{(n)}$, $n=1,2,\dots$, converge to a Black-Scholes model
with constant volatility $\sigma>0$. In light of Theorem~\ref{thm:1},
it is clear that in order to get a non-trivial limit for the
corresponding super-replication prices with fixed transaction costs,
one has to rescale the fixed costs suitably. Our next result shows
that the correct scaling is of the order $1/n$ and it identifies the
resulting scaling limit as a $G$-expectation with penalty involving
stochastic volatility models. These are specified as martingale
exponentials
\begin{equation}\label{eq:66}
S^{(\nu)}_t=s_0\exp\left(\int_{0}^t \nu_u dW_u-
\frac{1}{2}\int_{0}^t \nu^2_u du \right),
\quad t \in [0,1],
\end{equation}
where $W$ is a standard Brownian motion on some complete probability
space $(\Omega^W, \mathcal{F}^{W}, \mathbb{P}^{W})$ and where $\nu$ is
taken from the set $\mathcal A^W$ of all bounded, real-valued
processes $\nu \geq \sigma$ on this space which are progressively
measurable with respect to the augmented filtration
$(\cF^W_t)_{t \in [0,1]}$ generated by $W$.

\begin{thm}\label{thm:2}
  For a convex payoff $F=f(S_1)$ with continuous, convex $f:[0,\infty)
  \to \RR$ with polynomial growth, the scaling limit of
  superreplication prices in the binomial models~\eqref{eq:112} with
  fixed costs $\kappa/n$, $n=1,2,\dots$, is
 \begin{equation}
   \label{eq:5}
   \lim_{n \to \infty} \VV^{\kappa/n}(f(S^{(n)}_1))=\inf_{\sigma \leq \nu \in \cA^W}
   \E^W\left[f(S^{(\nu)}_1)+\kappa \int_0^1 g(\nu_t^2/\sigma^2)\,dt\right]
\end{equation}
where $g:[1,\infty) \to (0,1]$ is the linear interpolation supported by
$g(n)=1/n$, $n=1,2,\dots$ and where the infimum is taken over all the
probability spaces and volatility processes $\nu \geq \sigma$
described above.
\end{thm}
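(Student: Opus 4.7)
My plan follows the three-part structure announced in the introduction: first derive a variational dual representation for the binomial super-replication price, then use it to prove matching liminf and limsup bounds via weak-convergence arguments in the spirit of Kusuoka \cite{K}.

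Step 1 (dual representation). Using convexity of $f$ and the completeness of the frictionless binomial model, I would set up the dynamic programming equation on the binomial tree and linearize between trading dates. For a fixed intervention schedule $\TT$, the optimal holding on $(\tau_i,\tau_{i+1}]$ should be the slope of a convex envelope of $f$ restricted to the sub-lattice reachable from $S^{(n)}_{\tau_i}$ after $\tau_{i+1}-\tau_i$ binomial steps; convexity then turns the backward recursion into a supremum over martingale measures for the \emph{coarsened} price process $(S^{(n)}_{\tau_i})_i$, paid for by a $\kappa/n$-penalty per trade. I expect a representation along the lines of
\[
\VV^{\kappa/n}(f(S^{(n)}_1))=\sup_{\TT,Q}\Bigl\{\E_Q\bigl[f(S^{(n)}_1)\bigr]-\tfrac{\kappa}{n}\E_Q[N(\TT)]\Bigr\},
\]
possibly in a min-max form, where $Q$ runs over measures making the coarse process a martingale and $\TT$ over stopping times adapted to the binomial filtration.

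Step 2 (lower bound). Taking a near-optimal sequence $(\TT^{(n)},Q^{(n)})$, I would use functional limit theorems (Aldous' tightness criterion plus convergence of quadratic variation) to extract, along a subsequence, a weak limit in which the coarsened $S^{(n)}$ converges to a continuous martingale; after writing this martingale as a stochastic volatility process $S^{(\nu)}$, one sees $\nu\geq\sigma$ since the underlying binomial jumps already carry volatility $\sigma$. Combining a Jensen-type argument on blocks of $k$ consecutive binomial steps (whose coarse-grid volatility is $\sigma\sqrt k$) with the linear-interpolation definition of $g$, I would show that $\tfrac{\kappa}{n}\E_{Q^{(n)}}[N(\TT^{(n)})]$ converges to a quantity at least $\kappa\,\E^W\!\int_0^1 g(\nu_t^2/\sigma^2)\,dt$, yielding the desired $\liminf$ bound.

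Step 3 (upper bound). Given bounded piecewise-constant $\nu\geq\sigma$ (to which the general case reduces by standard approximation), I would construct a binomial super-replicating strategy that trades on a (possibly randomized) grid with local spacing $k_t$ binomial steps chosen so that $\sigma^2 k_t\approx \nu_t^2$, and between trades holds the left-derivative in $s$ of the Black--Scholes price in the $\nu$-stochastic-volatility model. A Skorokhod-type coupling between $S^{(n)}$ and $S^{(\nu)}$ combined with convexity of $f$ should give super-replication of $f(S^{(n)}_1)$ at cost converging to $\E^W[f(S^{(\nu)}_1)]+\kappa\,\E^W\!\int_0^1 g(\nu_t^2/\sigma^2)\,dt$, and minimizing over $\nu$ delivers the matching $\limsup$.

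The main obstacle I expect is pinning the penalty limit down to precisely $\kappa\int_0^1 g(\nu_t^2/\sigma^2)\,dt$ rather than some other functional of $\nu$. On the lower-bound side, ruling out trading schemes that beat integer-spaced grids forces a delicate use of the convexity of $f$ and of the combinatorial structure of the binomial tree; on the upper-bound side, realizing non-integer values of $\sigma^2/\nu_t^2$ requires a randomization between two consecutive integer spacings, matched to the piecewise linear structure of $g$. Coupling these randomized trading schedules with a target stochastic volatility model while maintaining the super-replication inequality, and then passing to the weak-convergence limit uniformly in the polynomial growth of $f$, is the technical crux of the argument.
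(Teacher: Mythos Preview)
Your overall architecture (duality, then matching bounds via Kusuoka-style weak convergence, with integer-mixing to hit the piecewise-linear $g$) is exactly the paper's, but the dual formula you write in Step~1 has the wrong structure. A representation $\sup_{\TT,Q}\{\E_Q[f]-\tfrac{\kappa}{n}\E_Q[N(\TT)]\}$ is \emph{decreasing} in $\kappa$, whereas the super-replication price with higher fixed costs must be larger; so this cannot be right, even ``in a min-max form.'' The correct representation (Lemma~\ref{lem:1}) is an \emph{infimum} with the penalty \emph{added}:
\[
\VV^{\kappa}(f(S^{(n)}_1))=\inf_{\TT\in\cT^{(n)}}\E_{\QQ(\TT)}\bigl[f(S^{(n)}_1)+\kappa\,N(\TT)\bigr],
\]
where $\cT^{(n)}$ is the special class of stopping systems between whose successive times the binomial path moves \emph{monotonically}, and $\QQ(\TT)$ is the unique martingale measure for the coarsened process supported on those monotone scenarios. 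The key structural point you are missing is that convexity of $f$ does not produce a sup over measures for fixed $\TT$; rather, it pins down the extremal measure $\QQ(\TT)$ and lets one reverse the ``$\leq$'' direction by replicating $f(S^{(n)}_1)+\kappa N(\TT)$ in the complete coarsened market and then using Jensen's inequality along the original $\P^{(n)}$-martingale $S^{(n)}$ to super-replicate on the full tree.

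Once the dual is corrected, your Steps~2 and~3 largely survive with the obvious sign fixes: the lower bound indeed takes a near-optimal $\TT^{(n)}$ from the \emph{inf}, refines it so consecutive gaps are bounded by $m/n$ (needed for tightness), and shows $\liminf_n \tfrac{1}{n}N(\TT^{(n)})\geq\int_0^1 g(\nu_t^2/\sigma^2)\,dt$ via a Koml\'os-type argument on the block lengths (this is where integrality of $b_n$ gives $g(b_n)=1/b_n$). For the upper bound the paper does \emph{not} build an explicit hedge as you propose; it stays on the dual side, constructing $\TT^{(n)}\in\cT^{(n)}$ that alternate between $[\rho_j]$- and $([\rho_j]{+}1)$-step monotone runs in the proportion dictated by $g$, and then bounds $\VV^{\kappa/n}$ from above by $\E_{\QQ(\TT^{(n)})}[f+\tfrac{\kappa}{n}N(\TT^{(n)})]$. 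Your direct-hedging route might also work, but the dual construction is considerably cleaner and avoids having to verify a pathwise super-replication inequality.
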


The proof of Theorem \ref{thm:2} is prepared by a duality result for
super-replication with fixed costs presented in
Section~\ref{sec:duality}. Section~\ref{sec:lowerbound} then
establishes ``$\geq$'' and Section~\ref{sec:upperbound} proves
``$\leq$'' in~\eqref{eq:5}, completing the proof.

Let us explain the intuition behind the above result. As will also be
revealed by our proof below, the local volatility pattern $\nu$ can be
viewed as a continuous-time measure of trading activity. For this
pattern to attain the infimum in~\eqref{eq:5}, it has to trade off the
option price $\E^W\left[f(S^{(\nu)}_1\right]$ against the expected
costs $\E^W\left[\int_0^1 g(\nu_t^2/\sigma^2)\,dt\right]$. Indeed,
since $f$ is convex, the option price is increasing as a function
of the volatility pattern $\nu \geq \sigma$ and thus would be
minimized by $\nu \equiv \sigma$. This choice, however, incurs the
maximum penalty as $g$ is decreasing. This increased reference
volatility is reminiscent of Leland's frictional trading recipe which
suggests to use a delta hedging strategy with increased local
volatility for approximate hedges with vanishing proportional
transaction costs.

\subsection{Duality for binomial models with fixed transaction costs}
\label{sec:duality}

The starting point for the proof of Theorem~\ref{thm:2} is a form of
dual characterization of super-replication prices with fixed costs in
binomial models which works for the special case of convex payoff
profiles.

To specify this duality, let us fix $n \in \cbr{1,2,\dots}$ and
consider the class $\cT^{(n)}$ of systems
$\TT=\cbr{0=\tau_0 \leq \dots \leq \tau_n=1} \in \cT$ of
$(\cF^{(n)}_t)$-stopping times with values in $\cbr{0/n,1/n,\dots,1}$
such that if $\tau_{k+1}(\omega)<1$ then
\begin{align} \label{eq:111} \begin{split}
\xi(\omega) &\equiv +1 \text{ for all } i \in
  \cbr{n\tau_k(\omega)+1,\dots,n\tau_{k+1}(\omega)}\\
\text{or }
\xi(\omega)& \equiv -1 \text{ for all } i \in
  \cbr{n\tau_k(\omega)+1,\dots,n\tau_{k+1}(\omega)}.\end{split}
\end{align}
In other words, the stopping times $\tau_k \leq \tau_{k+1}$ are such
that $\tau_{k+1}(\omega)=1$ in scenarios $\omega$ where
$S^{(n)}(\omega)$ is not strictly increasing or strictly decreasing
between $\tau_k(\omega)$ and $\tau_{k+1}(\omega)$. Also, already at
time $\tau_k$ it is known by how many downward steps and how many
upward steps the next stop $\tau_{k+1}$ will be reached. In other
words, for suitable functions
$\phi^\downarrow_k,\phi^\uparrow_k:\RR_+^{k+1} \to \mathbb{N}$, the
number of these steps can be written in the form
$\phi^\downarrow_k(S^{(n)}_0,\dots,S^{(n)}_{\tau_k})$ and
$\phi^\uparrow_k(S^{(n)}_0,\dots,S^{(n)}_{\tau_k})$, respectively, for
each $k=0,1,\dots$. Now let $\QQ(\TT) \ll  \P$ be the unique
martingale measure for $(S^{(n)}_{\tau_k})_{k=0,1,\dots}$ with respect
to $(\cF_{\tau_k})_{k=0,1,\dots}$ such that~\eqref{eq:111} holds also
for $\QQ(\TT)$-almost every $\omega$ with
$\tau_{k+1}(\omega)=1$. Hence, $\QQ(\TT)$ only gives probability to
the set of scenarios $\omega$ in which the terminal value
$S^{(n)}_1(\omega)$ is reached from the latest
$S^{(n)}_{\tau_k}(\omega)$ with $\tau_k(\omega)<1$ in a strictly
monotone way.

\begin{Lemma}\label{lem:1}
In the $n$-step binomial model~\eqref{eq:112} with fixed transaction
costs $\kappa>0$, the super-replication costs of a payoff $F=f(S^{(n)}_1)$
with $f$ convex on $(0,\infty)$ are
\begin{equation}
\VV^{\kappa}(f(S^{(n)}_1)) = \inf_{\TT \in \cT^{(n)}}
\E_{\QQ(\TT)}[f(S^{(n)}_1)+{\kappa} N(\TT)].\label{eq:7}
\end{equation}
\end{Lemma}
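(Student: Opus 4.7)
I would establish the two inequalities separately.

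For the upper bound, fix $\TT \in \cT^{(n)}$ and introduce the $\QQ(\TT)$-martingale
$$M_k := \E_{\QQ(\TT)}\bigl[f(S^{(n)}_1) + \kappa N(\TT) \,\big|\, \cF_{\tau_k}\bigr].$$
Under $\QQ(\TT)$ the conditional law of $S^{(n)}_{\tau_{k+1}}$ given $\cF_{\tau_k}$ is supported on the two monotone endpoints $s^u_k := S^{(n)}_{\tau_k} u^{\phi^\uparrow_k}$ and $s^d_k := S^{(n)}_{\tau_k} d^{\phi^\downarrow_k}$, so binomial martingale representation produces unique $\cF_{\tau_k}$-measurable hedges $h_k = (M^u_{k+1} - M^d_{k+1})/(s^u_k - s^d_k)$ with $M_{k+1} - M_k = h_k\bigl(S^{(n)}_{\tau_{k+1}} - S^{(n)}_{\tau_k}\bigr)$ $\QQ(\TT)$-a.s. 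Running the strategy $(\TT, (h_k))$ from initial capital $M_0 = \E_{\QQ(\TT)}[f(S^{(n)}_1) + \kappa N(\TT)]$ then replicates $f(S^{(n)}_1)$ exactly on the $\QQ(\TT)$-support. On a $\P$-path off this support the defining condition of $\cT^{(n)}$ forces $\tau_{k+1} = 1$ at the first non-monotone segment, so the terminal wealth equals the affine function $L(S^{(n)}_1) = M_k + h_k(S^{(n)}_1 - S^{(n)}_{\tau_k})$; by an inductive hypothesis $M^{u/d}_{k+1} \geq f(s^{u/d}_k)$, so $L$ lies above the chord of $f$ through $(s^u_k, f(s^u_k))$ and $(s^d_k, f(s^d_k))$, and by convexity of $f$ this chord dominates $f$ on $[s^d_k, s^u_k]$. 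A short combinatorial check, automatic at the terminal segment (where $\phi^{\uparrow/\downarrow}_k = n(1-\tau_k)$) and reduced inductively at earlier segments, shows $S^{(n)}_1 \in [s^d_k, s^u_k]$ on such $\P$-paths.

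For the lower bound I would argue by backward dynamic programming on the binomial tree. Let $U_k(s)$ denote the super-replication value at time $k/n$ with price $s$ and zero holdings (the initial trade being free). The Bellman equation expresses $U_k(s)$ as the smallest wealth such that there is a commitment to a holding $h$ and a next stopping index $\ell \in \NN$ with $W + h(S_{k+\ell} - s)$ dominating $U_{k+\ell}(S_{k+\ell}) + \kappa$ at the two monotone endpoints $s u^\ell$ and $s d^\ell$ (whenever the path remains monotone) and dominating $f(S^{(n)}_1)$ outright on non-monotone extensions. Convexity of $f$ together with the inductive hypothesis for $U_{k+\ell}$ at the monotone endpoints makes the non-monotone constraint redundant (it is implied by the two endpoint constraints via chord-dominance), reducing the recursion to a two-point $\QQ$-expectation with a $\kappa$-penalty for the next trade. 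Iterating and identifying $(\ell_k)$ with the announcements $(\phi^{\uparrow/\downarrow}_k)$ parameterising elements of $\cT^{(n)}$ yields $U_0(s_0) = \inf_{\TT \in \cT^{(n)}} \E_{\QQ(\TT)}[f(S^{(n)}_1) + \kappa N(\TT)]$, and $U_0(s_0) = \VV^{\kappa}(f(S^{(n)}_1))$ closes the bound.

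\textbf{Main obstacle.} The crucial step is the chord-dominance verification in the upper bound: showing that on every $\P$-path off the $\QQ(\TT)$-support the reached $S^{(n)}_1$ lies within $[s^d_k, s^u_k]$ so that convexity of $f$ promotes the linear extrapolation $L$ above $f$. This is clean at the terminal segment but requires an inductive unwinding for earlier segments, and this is precisely the step that fails without convexity of $f$. The same convexity-driven reduction is what collapses the Bellman recursion in the lower bound onto the two-point $\QQ(\TT)$-structure.
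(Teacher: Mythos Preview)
Your upper-bound plan has a genuine gap. The claim that $S^{(n)}_1 \in [s^d_k, s^u_k]$ on $\P$-paths off the $\QQ(\TT)$-support is false. Take $n=5$, $\phi^\uparrow_0 = 1$, $\phi^\downarrow_0 = 5$, and the path $(\text{down, up, up, up, up})$. The first step is down and the second is up, so the initial segment becomes non-monotone before either barrier is hit, forcing $\tau_1 = 1$. But then $S^{(n)}_1 = s_0 u^3$ while $s^u_0 = s_0 u$, so $S^{(n)}_1 > s^u_0$. Chord-dominance on $[s^d_0, s^u_0]$ tells you nothing at $s_0 u^3$, and for convex $f$ the affine extrapolation of $L$ beyond $s^u_0$ can in principle fall below $f$. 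Your ``short combinatorial check, automatic at the terminal segment'' cannot rescue this: whether segment $k$ is terminal (i.e.\ whether $\phi^{\uparrow/\downarrow}_k \geq n(1-\tau_k)$) is fixed at time $\tau_k$, not by where the path subsequently wanders, and in the example segment $0$ is not terminal.

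The paper's upper bound sidesteps this geometry entirely. Instead of trading at the given $\tau_k$, it builds new times $\tilde\tau_k$ as successive \emph{two-sided level passage times} for the barriers $S^{(n)}_{\tilde\tau_k} u^{\phi^\uparrow_k}$, $S^{(n)}_{\tilde\tau_k} d^{\phi^\downarrow_k}$, allowing $\tilde\tau_n > 1$ on an extended binomial model. The key identity is that $(S^{(n)}_{\tilde\tau_k})_k$ under $\P^{(n)}$ has the same law as $(S^{(n)}_{\tau_k})_k$ under $\QQ(\TT)$, so the replicating hedge transported to the $\tilde\tau$-grid gives $x + G_{\tilde\tau_n} \geq f(S^{(n)}_{\tilde\tau_n})$ $\P^{(n)}$-a.s. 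One then pulls back to time $1$ by conditioning on $\cF^{(n)}_1$: the gains process is a $\P^{(n)}$-supermartingale and $f(S^{(n)}_{\tilde\tau_n})$ dominates $f(S^{(n)}_1)$ in conditional expectation by Jensen. Convexity enters only through this last Jensen step; no off-support casework is needed.

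For the lower bound your dynamic-programming outline is plausible but underspecified: the control at each node is a pair $(\phi^\uparrow,\phi^\downarrow)$, not a single $\ell$, and the ``redundancy of the non-monotone constraint'' relies on exactly the chord argument that just failed. The paper's route is shorter: given any super-replicating $(\TT,\HH)$, pass to the coarsening $\tilde\TT \in \cT^{(n)}$ obtained by sending $\tau_{k+1}$ to $1$ whenever the segment $[\tau_k,\tau_{k+1}]$ is non-monotone. Under $\QQ(\tilde\TT)$ the two systems coincide a.s., so the super-replication inequality together with the $\QQ(\tilde\TT)$-martingale property of $(S^{(n)}_{\tilde\tau_k})$ gives $x \geq \E_{\QQ(\tilde\TT)}[f(S^{(n)}_1)+\kappa N(\tilde\TT)]$ in one line.
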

\begin{proof}
  Let us start by proving ``$\geq$'' in~\eqref{eq:7}. So take
  $x \in \RR$ and an admissible $(\TT,\HH)$ such that
  $x + G(\TT,\HH)_1 \geq f(S^{(n)}_1)$. By removing stopping points
  from $\TT=\cbr{\tau_k}_{k=0,1,\dots}$ if necessary we obtain a
  possibly coarser stopping system
  $\tilde{\TT}=\cbr{\tilde{\tau}_k}_{k=0,1,\dots}$ from our special
  class $\cT^{(n)}$ such that, under the unique martingale measure
  $\QQ(\tilde{\TT})$ for $(S^{(n)}_{\tilde{\tau}_k})_{k=0,1,\dots}$,
  we have $\tau_k=\tilde{\tau}_k$ almost surely for
  $k=0,1,\dots$. Therefore, we still have the super-replication
  property $x + G(\tilde{\TT},\HH)_1 \geq f(S^{(n)}_1)$
  $\QQ(\tilde{\TT})$-a.s. This allows us to conclude
  \begin{align*}
  x& = \E_{\QQ(\tilde{\TT})} [x+\sum_k
    h_k(S^{(n)}_{\tilde{\tau}_{k+1}}-S^{(n)}_{\tilde{\tau}_{k}})]\geq \E_{\QQ(\tilde{\TT})} [f(S^{(n)}_1)+\kappa N(\tilde{\TT}) ]
 \end{align*}
 as we wanted to show.

 We next establish ``$\leq$'' in~\eqref{eq:7}.  To this end, fix
 $\TT \in \cT^{(n)}$, put $\QQ \set \QQ(\TT)$, and denote
 $x:=\E_{\QQ}\left[f(S^{(n)}_1)+\kappa N(\TT)\right]$.  Observe that, under $\QQ$,
 the (frictionless) financial market with stock price process
 $(S^{(n)}_{\tau_k})_{k=0,1,\dots,n}$ is a binomial market and hence
 complete. The unique martingale measure is $\QQ$.  Thus, there exist
 measurable functions $\psi_k:\mathbb R^k_{+}\rightarrow\mathbb R$,
 $k=0,1,\dots,n$ such that
\begin{equation}\label{4.1}
x+\sum_{k=0}^{n-1} \psi_k\left(S^{(n)}_{\tau_1},...,S^{(n)}_{\tau_k}\right)
\left(S^{(n)}_{\tau_{k+1}}-S^{(n)}_{\tau_k}\right)=
f(S^{(n)}_1)+\kappa N(\TT) \quad \text{$\QQ$-a.s.}
\end{equation}
Let us now use these maps $\psi_k$, $k=0,1,\dots,n$, in order to
construct a super-replicating strategy for our $n$-step binomial
market with fixed transaction costs~$\kappa$. For this it will be
convenient to consider the obvious expansion of our binomial
model~\eqref{eq:112} from $[0,T]=[0,1]$ to all of $[0,\infty)$. Let
$\P^{(n)}$ still denote its locally equivalent martingale measure.
Use the mappings $\phi^\downarrow$ and $\phi^\uparrow$ associated with
the stopping system $\TT$ to define another system of stopping times
$\tilde{\TT}$ by $\tilde\tau_0 \set 0$ and, for $k=0,1,\dots$,
\begin{align}\label{4.new}
\tilde\tau_{k+1} \set &\min\cbr{t>\tilde\tau_k:\ln (\frac{S^{(n)}_t}{S^{(n)}_{\tilde\tau_k}})/\frac{\sigma}{\sqrt{n}}=\phi^\uparrow_k(S^{(n)}_{\tilde\tau_1},...,S^{(n)}_{\tilde\tau_k})}
\\&\wedge
\min\cbr{t>\tilde\tau_k: \ln (\frac{S^{(n)}_t}{S^{(n)}_{\tilde\tau_k}})/\frac{\sigma}{\sqrt{n}}=-\phi^\downarrow_k(S^{(n)}_{\tilde\tau_1},...,S^{(n)}_{\tilde\tau_k})}.\nonumber
\end{align}
Clearly, these successive two-sided level passage times
$\tilde\tau_1,...,\tilde\tau_n$ are finite $\P^{(n)}$-almost surely,
with $\tilde{\tau}_n \geq 1$. To obtain a strategy on $[0,1]$ we
truncate and consider the trading strategy $(\hat{\TT},\HH)$
intervening at times $\hat{\tau}_k \set \tilde{\tau}_k \wedge 1$
according to $\HH=(h_k)_{k=0,1,\dots}$ where
$$
h_k \set  \psi_k(S^{(n)}_{\tilde\tau_1},...,S^{(n)}_{\tilde\tau_k}), \quad k=0,1,\dots.
$$
In order to conclude our assertion, it is now sufficient to show that
$x+G(\hat{\TT},\HH)_1 \geq f(S^{(n)}_1)$ $\P^{(n)}$-a.s. In fact, we
will argue that
\begin{equation}\label{eq:800}
x+G(\tilde{\TT},\HH)_{\tilde{\tau}_n} \geq f(S^{(n)}_{\tilde{\tau}_n})
\text{ and } \tilde{\tau}_n \geq 1
\quad \text{$\P^{(n)}$-a.s.}
\end{equation}
which entails our assertion because
\begin{align*}
 x+G(\hat{\TT},\HH)_1 &=  x+G(\tilde{\TT},\HH)_1 \geq
 \E_{\P^{(n)}}[x+G(\tilde{\TT},\HH)_{\tilde{\tau}_n}\;|\;\cF^{(n)}_1]\\
& \geq  \E_{\P^{(n)}}[f(S^{(n)}_{\tilde{\tau}_n}) \;|\;\cF^{(n)}_1] \geq f(S^{(n)}_1).
\end{align*}
Here, the first estimate holds because $G(\tilde{\TT},\HH)$ is a
super-martingale under $\P^{(n)}$, the second estimate is due
to~\eqref{eq:800}, and the final one is due to Jensen's inequality for
the convex function $f$ and the $\P^{(n)}$-martingale $S^{(n)}$ (which
is uniformly bounded up to time $\tilde{\tau}_n$).

It remains to prove~\eqref{eq:800}. For this observe that by
construction both sides of this inequality are functionals of
$(S^{(n)}_{\tilde{\tau}_k})_{k=0,1,\dots}$. Moreover, this process is
a binomial martingale under $\P^{(n)}$ with exactly the same
jump characteristics as $(S^{(n)}_{\tau_k})_{k=0,1,\dots}$ under
$\QQ$ and, therefore,
\[
\Law((S^{(n)}_{\tilde{\tau}_k})_{k=0,1,\dots} \;|\;
\P^{(n)})=\Law((S^{(n)}_{{\tau}_k})_{k=0,1,\dots}\;|\;\QQ).
\]
As a consequence, \eqref{eq:800} is immediate from~\eqref{4.1}.
\end{proof}

For later use let us also note the following lemma which illustrates
the trade-off to be struck in our dual description of the
super-replication problem: For a convex payoff,
$\E_{\QQ(\TT)}[f(S^{(n)}_1)]$ may decrease when we add stops to $\TT$
while of course any added stop will let the number of interventions
$N(\TT)$ increase.

\begin{Lemma}\label{lem:2}
  If $\TT' \in \cT^{(n)}$ is a refinement of $\TT \in\cT^{(n)}$ in the
  sense that for any $\tau_k$ from $\TT$ we have
 $$
\tau_{k} = \max \cbr{\tau'_{k'} \in \TT' \;|\; \tau'_{k'} \leq \tau_k},
 $$
then for any convex payoff profile
  $f:(0,\infty)\to \RR$ we have
$$
\E_{\QQ(\TT')}[f(S^{(n)}_1)] \leq \E_{\QQ(\TT)}[f(S^{(n)}_1)].
$$
\end{Lemma}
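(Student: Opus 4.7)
The plan is to prove the lemma by backward induction along the $\TT$-stops, combining the convexity of the $\QQ(\TT)$-value function with an extremal principle for two-point martingale distributions.

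Writing $u\set\exp(\sigma/\sqrt n)$, I would first show that the $\QQ(\TT)$-continuation value
\[
V_k(s)\set\E_{\QQ(\TT)}\!\left[f(S^{(n)}_1)\,\big|\,\cF_{\tau_k},\,S^{(n)}_{\tau_k}=s\right]
\]
is convex in~$s$ for each $\TT$-stop index~$k$. The $\QQ(\TT)$-backward recursion reads
\[
V_k(s)=p_k\,V_{k+1}(s\,u^{\phi^\uparrow_k})+(1-p_k)\,V_{k+1}(s\,u^{-\phi^\downarrow_k}),
\]
where the martingale probability $p_k=(1-u^{-\phi^\downarrow_k})/(u^{\phi^\uparrow_k}-u^{-\phi^\downarrow_k})$ is $\cF_{\tau_k}$-measurable but, crucially, does not depend on~$s$. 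Convexity therefore propagates from the terminal $V=f$ down to every~$V_k$.

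Next I would establish, again by backward induction, the super-martingale-type inequality
\[
\E_{\QQ(\TT')}\!\left[V_k(S^{(n)}_{\tau_k})\,\big|\,\cF_{\tau_{k-1}}\right]\le V_{k-1}(S^{(n)}_{\tau_{k-1}})\quad\QQ(\TT')\text{-a.s.}
\]
Since $\TT\subset\TT'$, both $\tau_{k-1}$ and $\tau_k$ are $\TT'$-stops, so the $\QQ(\TT')$-martingale property yields $\E_{\QQ(\TT')}[S^{(n)}_{\tau_k}\mid\cF_{\tau_{k-1}}]=S^{(n)}_{\tau_{k-1}}$. Combined with the convexity of $V_k$ and the extremal property of two-point distributions --- namely, that any mean-$s$ measure supported in $[s\,u^{-\phi^\downarrow_{k-1}},\,s\,u^{\phi^\uparrow_{k-1}}]$ yields, for any convex $V$, an expected value bounded above by the two-point extremal value $p_{k-1}V(s\,u^{\phi^\uparrow_{k-1}})+(1-p_{k-1})V(s\,u^{-\phi^\downarrow_{k-1}})$ --- the backward recursion for $V_{k-1}$ would then deliver the inequality. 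Telescoping via the tower property produces
\[
\E_{\QQ(\TT')}[f(S^{(n)}_1)]\le V_0(s_0)=\E_{\QQ(\TT)}[f(S^{(n)}_1)],
\]
which is the lemma.

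The hard part is justifying the support bound $S^{(n)}_{\tau_k}\in[S^{(n)}_{\tau_{k-1}}u^{-\phi^\downarrow_{k-1}},\,S^{(n)}_{\tau_{k-1}}u^{\phi^\uparrow_{k-1}}]$ under $\QQ(\TT')$, which need not hold as stated: on a $\QQ(\TT')$-path whose monotone $\TT'$-sub-pieces concatenate to a non-monotone trajectory over $[\tau_{k-1},\tau_k]$, the $\TT$-rule forces $\tau_k=1$ and the terminal value $S^{(n)}_{\tau_k}=S^{(n)}_1$ can leave the interval. To repair the argument I would iterate the two-point extremal inequality along the finer $\TT'$-sub-stops between $\tau_{k-1}$ and $\tau_k$ rather than in one step, enlarging $V_k$ to depend on the random stopping time $\tau_k$ (via the remaining-step count) so that convexity in~$s$ is preserved across the different scenarios, and then telescope the resulting per-$\TT'$-stop inequalities back up to the $\TT$-level.
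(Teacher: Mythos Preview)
Your approach differs substantially from the paper's. The paper gives a one-line conceptual argument: it interprets $\E_{\QQ(\TT)}[f(S^{(n)}_1)]$ as the frictionless super-replication price of the convex claim $f(S^{(n)}_1)$ when trading is restricted to the stopping times in $\TT$ (this identification leans on the proof of Lemma~\ref{lem:1}, specifically the construction of a super-hedge from the $\QQ(\TT)$-replicating weights together with the Jensen step). Since refining $\TT$ to $\TT'$ only enlarges the class of admissible strategies, that price can only decrease, and the inequality follows immediately.

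Your route is direct and analytic: propagate convexity of the $\QQ(\TT)$-value functions $V_k$ backward, then invoke a two-point extremal bound to compare the $\QQ(\TT')$-conditional expectations with $V_{k-1}$. You correctly identify the obstruction: on a $\QQ(\TT')$-path whose monotone $\TT'$-sub-pieces concatenate non-monotonically over $(\tau_{k-1},\tau_k]$, the $\TT$-rule forces $\tau_k=1$, and then $S^{(n)}_1$ need not lie in the interval $[S^{(n)}_{\tau_{k-1}}u^{-\phi^\downarrow_{k-1}},\,S^{(n)}_{\tau_{k-1}}u^{\phi^\uparrow_{k-1}}]$, so the extremal bound fails in one shot. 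Your proposed repair --- iterate along the $\TT'$-sub-stops with a state-augmented value function tracking the remaining up/down budget toward the next $\TT$-level --- is on the right track, but it would need careful work to confirm that convexity in $s$ survives the augmentation and that the pieces telescope correctly across the event $\{\tau_k=1\}$.

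The paper's super-replication framing sidesteps this bookkeeping entirely, at the cost of invoking the machinery from Lemma~\ref{lem:1}. A clean variant in the same spirit, should you wish to avoid your augmented recursion: use the level-passage embedding of~\eqref{4.new} to write $\E_{\QQ(\TT)}[f(S^{(n)}_1)]=\E_{\P^{(n)}}[f(S^{(n)}_{\tilde\tau_n})]$ and similarly for $\TT'$, check that the refinement forces $\tilde\tau'_n\le\tilde\tau_n$ $\P^{(n)}$-a.s., and conclude from Jensen's inequality for the bounded $\P^{(n)}$-martingale $S^{(n)}$.
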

\begin{proof}
  The measure $\QQ(\TT)$ is a martingale
  measure for $(S^{(n)}_{\tau_k})_{k=0,1,\dots}$ that is absolutely
  continuous with respect to $\P$ and which attains the frictionless
  super-replication price of the convex payoff $f(S^{(n)}_1)$ when trading
  is allowed only at times contained in $\TT$. Obviously, refining
  $\TT$ to $\TT' \in \cT^{(n)}$ offers more flexility to find
  super-replication strategies and thus cannot lead to a higher
  super-replication price.
\end{proof}

\subsection{Proof of the upper bound for super-replication
  prices}\label{sec:lowerbound}

In this section we will prove that ``$\geq$'' holds in our
formula~\eqref{eq:5} for the scaling limit. More precisely, we will
establish
\begin{equation}
  \label{eq:23}
  \liminf_n \VV^{\kappa/n}(f(S_1^{(n)})) \geq \inf_{\sigma \leq \nu \in \cA^W}
   \E^W\left[f(S^{(\nu)}_1)+\kappa \int_0^1 g(\nu_t^2/\sigma^2)\,dt\right]
\end{equation}
Without loss of generality (by passing to a sub--sequence) we assume that the limit
$\lim_n \VV^{\kappa/n}(f(S_1^{(n)}))$ exists in $[0,\infty]$.

By Lemma~\ref{lem:1}, we can find, for $n=1.2,\dots$, stopping systems
$\TT_0^{(n)} \in \cT^{(n)}$ such that
$$
 \VV^{\kappa/n}(f(S_1^{(n)})) \geq
 \E_{\QQ(\TT^{(n)}_0)}[f(S^{(n)}_1)+\frac{\kappa}{n} N(\TT^{(n)}_0)]-\frac{1}{n}.
$$
Hence, the $\liminf$ in~\eqref{eq:23} can be estimated if we get an
understanding, as $n \uparrow \infty$, of the joint law of
$S^{(n)}_1$ and $N(\TT^{(n)}_0)$ under $\QQ(\TT^{(n)}_0)$. While
tightness of this sequence of laws is not obvious, it can be
established for a suitable refinement of $\TT^{(n)}_0$ using an
argument which we adapt from Kusuoka~\cite{K}. To this end, fix
$m \in \cbr{1,2,\dots}$ and refine $\TT_0^{(n)}$ if necessary in such
a way that at most $m$ steps are taken between any two stopping
times. This gives us a stopping system
$\TT^{(n)} =\cbr{\tau^{(n)}_k}_{k=0,1,\dots}\in \cT^{(n)}$ with
$ N(\TT^{(n)}_0) \geq N(\TT^{(n)})-[(n-1)/m]$ and
\begin{equation}
  \label{eq:20}
  \tau^{(n)}_{k+1}-\tau^{(n)}_k \leq \frac{m}{n} \text{ on } \cbr{\tau^{(n)}_{k+1}<1}.
\end{equation}

In light of Lemma~\ref{lem:2}, we can now conclude that $\QQ^{(n)}
\set \QQ(\TT^{(n)})$ satisfies
\begin{equation}
  \label{eq:21}
 \VV^{\kappa/n}(f(S_1^{(n)})) \geq
 \E_{\QQ^{(n)}}[f(S^{(n)}_1)+\frac{\kappa}{n} N(\TT^{(n)})]-\frac{1}{n}-\frac{\kappa}{n}[(n-1)/m].
\end{equation}
Hence~\eqref{eq:23} will be established upon letting
$m\uparrow \infty$ once we can show that the
$\liminf_{n \uparrow \infty}$ of the expectations in~\eqref{eq:21} is
not smaller than the right-hand side of~\eqref{eq:23} for each
$m=1,2,\dots$. This will be accomplished using Kusuoka's tightness
argument for which we consider the processes $M^{(n)}$, $n=1,2,\dots$,
given by
\begin{align}\label{eq:301}
\begin{split}
M^{(n)}_{1}& \set S^{(n)}_{1},\\
M^{(n)}_{t} &\set S^{(n)}_{\tau^{(n)}_{k+1}} \text{ for } t \in
[\tau^{(n)}_k+1/n,\tau^{(n)}_{k+1}+1/n) \cap [0,1), \quad k=0,1,\dots.
\end{split}
\end{align}
Observe that $M^{(n)}$ is a version of the $\QQ^{(n)}$-martingale with
terminal value $S^{n}_1$:
$$
M^{(n)}_t = \E_{\QQ^{(n)}}[S^{(n)}_1 \;|\; \cF^{(n)}_t], \quad t
\in [0,1], \quad \text{ $\QQ^{(n)}$-a.s. }
$$

\begin{Lemma}\label{lem:3}
  Suppose $\TT^{(n)} \in \cT^{(n)}$, $n=1,2,\dots$, are partitions of
  $[0,1]$ such that~\eqref{eq:20} holds $\QQ^{(n)}$-almost surely
  where $\QQ^{(n)}=\QQ(\TT^{(n)})$. Then the sequence of distributions
  $(\Law(S^{(n)} \;|\; \QQ^{(n)}))_{n=1,2,\dots}$ is tight on the
  Skorohod space $\DD[0,1]$. Any weak accumulation point is the law of
  a strictly positive continuous martingale $M$ (in its own
  filtration) under some probability measure $\hat{\P}$ such that
 \begin{equation}\label{eq:500}
  \E_{\hat{\P}}[\max_{t \in [0,1]}( M_t)^p] \leq \sup_{n=1,2,\dots}
  \E_{\QQ^{(n)}}[\max_{t \in [0,1]} (M^{(n)}_t)^p]<\infty \text{ for any
  } p \geq 0.
  \end{equation}
  Moreover, the stochastic logarithm $L$ of $M/s_0$, i.e., the
  continuous local martingale $L$ such that $M=s_0\cE(L)$, has
  quadratic variation $\langle L \rangle$ absolutely continuous with
  respect to Lebesgue measure with density
  $\nu^2 \set d\langle L\rangle/dt \geq \sigma^2$.

 In addition, along a suitable subsequence, we have the weak convergence
\[
  \Law(S^{(n)}, \int_0^\cdot \alpha^{(n)}_{s} \,ds \;|\;
  \QQ^{(n)}) \to
  \Law(M,\int_0^\cdot \frac{1}{2}\left(\nu^2_t/\sigma^2-1\right)\,dt
  \;|\;\hat{\P}), \quad n \uparrow \infty,
\]
on $\DD[0,1] \times \DD[0,1]$ where
  \begin{equation}\label{eq:300}
     \alpha^{(n)}_t \set \frac{\sqrt{n}}{\sigma}|M^{(n)}_t-S^{(n)}_t|/S^{(n)}_t, \quad t \in [0,1]
 \end{equation}
 with $M^{(n)}$ given by~\eqref{eq:301}.
\end{Lemma}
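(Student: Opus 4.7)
The plan is to adapt the Kusuoka-type tightness argument from \cite{K} to the refined stopping systems $\TT^{(n)}$. My starting observation is that between any two successive $\tau^{(n)}_k$ at most $m$ binomial steps of log-size $\sigma/\sqrt n$ occur, all in the same direction, so that uniformly in $t\in[0,1]$ one has $|M^{(n)}_t - S^{(n)}_t|/S^{(n)}_t \leq e^{m\sigma/\sqrt n}-1$ and the jumps of $M^{(n)}$ (which occur only at times $\tau^{(n)}_k+1/n$) are bounded by $M^{(n)}_{t-}(e^{m\sigma/\sqrt n}-1)$. Both quantities vanish as $n\to\infty$. Moreover $M^{(n)}$ is a $\QQ^{(n)}$-martingale with terminal value $S^{(n)}_1$, and an exponential Chebyshev bound on $\sum \zeta_i$ coupled with Doob's maximal inequality will supply the uniform moment bound $\sup_n \E_{\QQ^{(n)}}[\max_t (M^{(n)}_t)^p]<\infty$ for every $p\geq 0$, from which \eqref{eq:500} follows by weak lower semicontinuity.

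I would then prove tightness of $(\Law(M^{(n)}\mid \QQ^{(n)}))$ on $\DD[0,1]$ via Aldous' criterion, for which I need to control the predictable quadratic variation $\langle M^{(n)}\rangle$. On an interval $[\tau^{(n)}_k+1/n,\tau^{(n)}_{k+1}+1/n)$ of length $j_k/n$ with $j_k\in\{1,\dots,m\}$ monotone steps, the single jump of $M^{(n)}$ contributes $(S^{(n)}_{\tau^{(n)}_k})^2(e^{\pm j_k\sigma/\sqrt n}-1)^2 = j_k^2\sigma^2(S^{(n)}_{\tau^{(n)}_k})^2/n+O(n^{-3/2})$ to $\langle M^{(n)}\rangle$, so $d\langle M^{(n)}\rangle/dt$ equals $j_k\sigma^2(S^{(n)}_{\tau^{(n)}_k})^2$ up to $o(1)$ and is uniformly bounded by $m\sigma^2\max_t(M^{(n)}_t)^2$. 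Aldous' criterion then yields tightness of $(M^{(n)})$, and together with $\|S^{(n)}-M^{(n)}\|_\infty \to 0$ also of $(S^{(n)})$. Any weak accumulation point $(M,\hat{\P})$ is continuous because the jumps vanish, strictly positive since $M^{(n)}_t>0$, and a martingale in its own filtration thanks to the uniform integrability supplied by the moment bound. Writing $M = s_0\cE(L)$, the quadratic variation $\langle L\rangle$ is inherited from the weak limit of $\langle M^{(n)}\rangle/(M^{(n)})^2$; absolute continuity follows from the uniform Lebesgue density above, and $\nu^2 = d\langle L\rangle/dt\geq \sigma^2$ from the fact that the minimal per-stretch density is attained at $j_k=1$ and equals exactly $\sigma^2$.

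The subtle piece, which I expect to be the main obstacle, is identifying the joint weak limit of $\int_0^\cdot\alpha^{(n)}_t\,dt$ as $\tfrac{1}{2}\int_0^\cdot(\nu^2_t/\sigma^2-1)\,dt$. A direct summation on each monotone stretch yields
\[
\int_{\tau^{(n)}_k+1/n}^{\tau^{(n)}_{k+1}+1/n}\alpha^{(n)}_t\,dt
= \frac{j_k(j_k-1)}{2n}+O(n^{-3/2})
= \tfrac{1}{2}\bigl(j_k-1\bigr)\bigl(\tau^{(n)}_{k+1}-\tau^{(n)}_k\bigr)+o(n^{-1}),
\]
which matches exactly the increment of $\tfrac{1}{2}\int(\nu^2_t/\sigma^2-1)\,dt$ under the assignment $\nu^2_t=j_k\sigma^2$ derived above. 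Joint tightness of $\bigl(S^{(n)},\int_0^\cdot\alpha^{(n)}_t\,dt\bigr)$ is immediate because the second component is nondecreasing and Lipschitz in $t$ with constant at most $m-1$, so along a subsequence one obtains a joint weak limit. The delicate point is that the density $\nu^2$ in the limit is only defined up to Lebesgue-null sets; I would fix its canonical version through the $L^1$-limit of the local variance densities above and verify, via Skorohod representation on a common space, that the two absolutely continuous increasing processes appearing in the limit agree almost surely.
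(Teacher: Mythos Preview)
Your proposal is correct and follows essentially the same approach as the paper, which simply observes that $\alpha^{(n)}$ is uniformly bounded (from~\eqref{eq:20}) and then cites Propositions~4.8 and~4.27 of Kusuoka~\cite{K} for tightness and the moment estimate~\eqref{eq:500}, and Lemma~7.1 of~\cite{DS} for the identification of the joint limit. Your sketch is, in effect, the content of those cited results specialized to the present setting.
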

\begin{proof}
  From (\ref{eq:20}) it follows that $\alpha^{(n)}_t$ is $\QQ^{(n)}$
  a.s. uniformly bounded (in $n$ and $t$), and so the tightness of
  $(\Law(S^{(n)} \;|\; \QQ^{(n)}))_{n=1,2,\dots}$ and the estimate
  (\ref{eq:500}) follow from Propositions 4.8 and 4.27 in
  \cite{K}. The second part of the lemma follows from Lemma 7.1 in
  \cite{DS}.
\end{proof}

By Skorohod's representation theorem, we can find processes
$\hat{S}^{(n)}$, $\hat{M}^{(n)}$, $\hat{\alpha}^{(n)}$, $n=1,2,\dots$,
on a common probability space $(\hat{\Omega},\hat{\cF},\hat{\P})$ which
have for each $n=1,2,\dots$ the same joint law as their
counterparts $(S^{(n)}, M^{(n)}, \alpha^{(n)})$ under $\QQ^{(n)}$ and
which are such that
$(\hat{S}^{(n)}, \hat{M}^{(n)}, \int_0^\cdot
\hat{\alpha}^{(n)}_u\,du)$ converges $\hat{\P}$-almost surely
uniformly in time to
$(\hat{M},\hat{M}, \int_0^\cdot
\frac{\hat{\nu}^2_t-\sigma^2}{2\sigma}\,dt)$ where
$\hat{M}=s_0 \cE(\hat{L})$ is a continuous $\hat{\P}$-martingale with
finite moments of arbitrary order and $\hat{\nu}^2$ is the density of
the quadratic variation of its stochastic logarithm $\hat{L}$ with
respect to Lebesgue meausre.

Moreover, for any $n=1,2,\dots$, we can define a system
$\hat{\TT}^{(n)}$ of stopping times $\hat{\tau}^{(n)}_k$,
$k=0,1,\dots$, for the filtration generated by $\hat{S}^{(n)}$ such
that also the joint $\hat{\P}$-law of these with
$(\hat{S}^{(n)},\hat{M}^{(n)})$ coincides with the joint law under
$\QQ^{(n)}$ of the stopping times $\tau^{(n)}_k$, $k=0,1,\dots$, with
$(S^{(n)},M^{(n)})$. In particular, we conclude from~\eqref{eq:20}
that
\[
\hat{\tau}^{(n)}_{k+1}-\hat{\tau}^{(n)}_{k} \leq \frac{m}{n}  \text{ $\hat{\P}$-a.s.}
\]
From~\eqref{eq:112} and~\eqref{eq:301}, we thus get the Taylor expansion
\[
\hat{\alpha}^{(n)}_t =
n\hat{\tau}^{(n)}_{k+1}-[nt]+O(m^2/\sqrt{n}) \text{
  for } t \in [\hat{\tau}^{(n)}_k+1/n,
\hat{\tau}^{(n)}_{k+1}+1/n) \cap [0,1] \text{ $\hat{\P}$-a.s.},
\]
where the absolute values of the $O(m^2/\sqrt{n})$-terms are uniformly in time and in
$\hat{\P}$-a.e. scenario less than or equal to $m^2/\sqrt{n}$.

The last observations allow us to apply Lemma~\ref{lem:A1} below (with
$b(t)\set \hat{\nu}^2_t/\sigma^2$) to get the estimate
\begin{equation}
\liminf_n \frac{N(\hat{\TT}^{(n)})}{n} \geq \int_0^1 g(\hat{\nu}^2_t/\sigma^2)\,dt
\text{ $\hat{\P}$-a.e.}\label{eq:306}
\end{equation}
where $g$ is the linearly interpolating function defined in
Theorem~\ref{thm:2}.

Taking $\liminf_n$ in~\eqref{eq:21} now gives
\begin{align*}
  \liminf_n  \VV^{\kappa/n}(f(S_1^{(n)}))
&\geq \liminf_n \E_{\QQ^{(n)}}[f(S^{(n)}_1)
+\kappa  N(\TT^{(n)})/n]-\frac{\kappa}{m}\\
&=\liminf_n \E_{\hat{\P}}[f(\hat{S}^{(n)}_1)
+\kappa N(\hat{\TT}^{(n)})/n]-\frac{\kappa}{m}\\
&\geq \E_{\hat{\P}}[f(\hat{M}_1)
+\kappa \int_0^1 g(\hat{\nu}^2_t/\sigma^2)\,dt] -\frac{\kappa}{m}
\end{align*}
where the final step is due to Fatou's lemma
and~\eqref{eq:306}. Applying a randomization technique similar to
Lemma~7.2 in \cite{DS} and letting $m\uparrow \infty$ now proves our
assertion~\eqref{eq:23}.

\begin{Lemma}
  \label{lem:A1}
    For $n=1,2,\dots$, let $\TT^{(n)}=\{0=t^{(n)}_0 \leq t^{(n)}_1
    \leq \dots \leq
    t^{(n)}_n=1\}$ be deterministic partitions of $[0,1]$ such that $n
    t^{(n)}_k \in \{0,1,\dots\}$ and $t^{(n)}_{k+1}-t^{(n)}_{k} \leq
    m/n$ for $k=0,1,\dots,n-1$. Suppose the functions
\[
a^{(n)}(t) \set nt^{(n)}_{k+1}-[nt], \quad t^{(n)}_k<t
\leq t^{(n)}_{k+1} \text{ for } k=0,1,\dots,n-1,
\]
satisfy
  \begin{equation}
\int_0^\cdot a^{(n)}(t) \,dt \to \int_0^\cdot \frac{1}{2}(b(t)-1) \,dt \text{
  uniformly on $[0,1]$}\label{eq:802}
\end{equation}
for some $b \in L^1([0,1],dt)$. Then we have
\[
 \liminf_{n} \frac{N(\TT^{(n)})}{n} \geq \int_0^1 g(b(t)) \,dt
\]
where $g$ is the linearly interpolating function defined in
Theorem~\ref{thm:2}.
\end{Lemma}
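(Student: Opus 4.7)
My plan is to reduce the claim to a pointwise Jensen inequality via the weak-$*$ limits in $L^\infty[0,1]$ of the indicators $\chi_\ell^{(n)}(t)\set 1\{\ell^{(n)}(t)=\ell\}$, where $\ell^{(n)}(t)\set n(t^{(n)}_{k+1}-t^{(n)}_{k})\in\{1,\dots,m\}$ denotes the partition gap (in units of $1/n$) containing $t\in(t^{(n)}_{k},t^{(n)}_{k+1}]$; degenerate indices with $t^{(n)}_{k+1}=t^{(n)}_{k}$ contribute nothing to $N(\TT^{(n)})$ and are discarded. Two elementary observations drive the argument. First, each non-degenerate gap of length $\ell/n$ contributes $1/n$ both to $N(\TT^{(n)})/n$ and to $\int_0^1 (1/\ell^{(n)})\,dt$, so
\[
\frac{N(\TT^{(n)})}{n}=\int_0^1\frac{dt}{\ell^{(n)}(t)}+O(1/n).
\]
Second, on such a gap, $a^{(n)}$ takes the values $\ell,\ell-1,\dots,1$, each on one length-$1/n$ sub-interval, so the Lebesgue law of $a^{(n)}$ conditional on $\ell^{(n)}=\ell$ is uniform on $\{1,\dots,\ell\}$, with conditional average $(\ell+1)/2$.

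By weak-$*$ compactness in $L^\infty[0,1]$, passing to a subsequence realizing $\liminf_n N(\TT^{(n)})/n$, I may assume $\chi_\ell^{(n)}\rightharpoonup^*\nu_\ell\in L^\infty[0,1]$ for each $\ell\in\{1,\dots,m\}$ (so $\nu_\ell\ge0,\ \sum_\ell\nu_\ell\equiv 1$). Then $1/\ell^{(n)}\rightharpoonup^*\phi(t):=\sum_\ell\nu_\ell(t)/\ell$, and by the second observation the weak-$*$ limit of $a^{(n)}\chi_\ell^{(n)}$ is $\nu_\ell(t)\cdot(\ell+1)/2$. Matching against the hypothesis $\int_0^\cdot a^{(n)}\,dt\to\int_0^\cdot (b-1)/2\,dt$ (which upgrades to $a^{(n)}\rightharpoonup^*(b-1)/2$) yields
\[
\frac{b(t)-1}{2}=\sum_{\ell=1}^m\nu_\ell(t)\,\frac{\ell+1}{2},\qquad\text{i.e.,}\qquad b(t)=\sum_{\ell=1}^m\nu_\ell(t)(\ell+2),
\]
and $\lim_n N(\TT^{(n)})/n=\int_0^1\phi\,dt$.

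To conclude, I would apply Jensen's inequality to $g$, which is convex on $[1,\infty)$ (its successive slopes $-1/(k(k+1))$ strictly increase):
\[
g(b(t))=g\Bigl(\sum_\ell\nu_\ell(t)(\ell+2)\Bigr)\le\sum_\ell\nu_\ell(t)\,g(\ell+2)=\sum_\ell\frac{\nu_\ell(t)}{\ell+2}\le\sum_\ell\frac{\nu_\ell(t)}{\ell}=\phi(t),
\]
where the last step uses the trivial bound $1/(\ell+2)\le 1/\ell$ together with $g(k)=1/k$ at integers. Integrating over $[0,1]$ gives $\int_0^1 g(b(t))\,dt\le\int_0^1\phi\,dt=\lim_n N(\TT^{(n)})/n$, which is the claim.

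The main technical point is the identification of the weak-$*$ limit of $a^{(n)}\chi_\ell^{(n)}$ as $\nu_\ell(t)(\ell+1)/2$: one must average $a^{(n)}$ over each partition interval before taking the limit. A direct verification tests against smooth $\psi$, splits $[0,1]$ into partition intervals of length $\ell^{(n)}/n\le m/n$, and on each such interval uses the explicit uniform distribution of $a^{(n)}$ on $\{1,\dots,\ell\}$; the $O(m/n)$ oscillation of $\psi$ over a single interval is absorbed in an $O(1/n)$ error, which vanishes in the limit. Once this averaging lemma is in hand, the rest is the convex-duality bookkeeping sketched above.
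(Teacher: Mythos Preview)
Your argument is correct and takes a genuinely different route from the paper's. The paper works with the same gap-length function $b_n(t)=\ell^{(n)}(t)$, records $N(\TT^{(n)})/n=\int_0^1 1/b_n\,dt$, and derives from the hypothesis that $\int_0^\cdot b_n\to\int_0^\cdot b$ uniformly; it then invokes the Koml\'os lemma to produce convex combinations $\tilde b_n\in\conv(b_n,b_{n+1},\dots)$ converging a.e.\ to $b$, and concludes via the identity $g(b_n)=1/b_n$ (valid because $b_n$ is integer-valued), convexity of $g$, and dominated convergence. You replace the Koml\'os step by a Young-measure decomposition: weak-$*$ limits $\nu_\ell$ of the indicators $\chi_\ell^{(n)}=1_{\{\ell^{(n)}=\ell\}}$, followed by a pointwise application of Jensen's inequality for $g$ against the probability vector $(\nu_\ell(t))_{\ell=1}^m$. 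This is more elementary---only Banach--Alaoglu plus the averaging identity $a^{(n)}\chi_\ell^{(n)}\rightharpoonup^*\nu_\ell\,(\ell+1)/2$ that you sketch---and it makes the limiting mixture structure explicit; the paper's route is more compact but imports the Koml\'os machinery. Your identification $b=\sum_\ell\nu_\ell(\ell+2)$ follows correctly from the lemma's stated definition of $a^{(n)}$ on $(t^{(n)}_k,t^{(n)}_{k+1}]$, where $a^{(n)}$ runs through $\ell,\ell-1,\dots,1$ with mean $(\ell+1)/2$; the resulting slack $1/(\ell+2)\le 1/\ell$ in your final chain of inequalities is harmless for the one-sided bound you need.
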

\begin{proof}
Without loss of generality (by passing to a sub--sequence) we assume that
$\lim_{n\rightarrow\infty}N(\TT^{(n)})/n$ exists.
For any $n$ introduce the function
$b_n:[0,1]\rightarrow [1,\infty)$ by $b_n(T)=0$ and
\begin{equation*}
b_n(t)=n(t^{(n)}_{k+1}-t^{(n)}_k), \ \ t^{(n)}_k\leq t<t^{(n)}_{k+1}, \ \ k=0,1,...,n-1,
\end{equation*}
for which we notice that
\begin{equation}\label{eq:611}
\frac{N(\TT^{(n)})}{n}=\int_{0}^1 \frac{1}{b_n(t)} dt.
\end{equation}
Simple calculations yield
\begin{align*}
\int_{t^{(n)}_k}^{t^{(n)}_{k+1}}&\left[\frac{1}{2}(b_n(t)-1)-a_n(t)\right]dt=0.
\end{align*}
This together with~\eqref{eq:802} and
the fact that $n(t^{(n)}_{k+1}-t^{(n)}_k)$ is bounded uniformly in $k$
and $n$ gives
\begin{equation}\label{eq:622}
\int_{0}^t b_n(u) du\rightarrow \int_{0}^t b(u) du \text{ uniformly in
  } t\in [0,1].
\end{equation}

The Komlos Lemma (see Lemma A 1.1 in \cite{DS1}) implies that there
exists a sequence of functions $\tilde{b}_n \in \conv(b_n,b_{n+1},...)$,
$n=1,2,\dots$, such that $\tilde{b}_n$ converges Lebesgue-almost everywhere
to a function $\tilde{b}$.  In fact, $\tilde{b}=b$ a.e. since by dominated convergence
and~\eqref{eq:622} we get
$$
\int_{0}^t \tilde{b}(u)du=\lim_{n\rightarrow\infty}\int_{0}^t \tilde{b}_n(u)du=
\lim_{n\rightarrow\infty}\int_{0}^t b_n(u)=\int_{0}^t b(u) du \text{
  for any } t\in[0,1].
$$
Finally, from~\eqref{eq:611}, the fact that the function $g$ is convex and continuous with
$g(b_n)=\frac{1}{b_n}$ (as $b_n$ is integer valued) we obtain
\begin{align*}
\lim_{n}\frac{N(\TT^{(n)})}{n}=
\lim_{n}\int_{0}^1 g(b_n(t))dt\geq
\lim_{n}\int_{0}^1 g(\tilde{b}_n(t))=\int_{0}^1 g(b(t))dt
\end{align*}
and the result follows.
\end{proof}

\subsection{Proof of the lower bound for super-replication prices}
\label{sec:upperbound}

In this section we will establish ``$\leq$'' for our
formula~\eqref{eq:5} for the scaling limit of super-replication
prices. More precisely, we will prove
\begin{equation}
  \label{eq:423}
  \limsup_n \VV^{\kappa/n}(f(S_1^{(n)})) \leq
   \E^W\left[f(S^{(\nu)}_1)+\kappa \int_0^1 g(\nu_t^2/\sigma^2)\,dt\right]
\end{equation}
for any volatility processs $\nu \geq \sigma$ in $\cA^W$ on some filtered
probability space $(\Omega^W, \cF^W,  (\cF^W_t), \P^W)$ supporting a
Brownian motion $W$ as considered in Theorem~\ref{thm:2}. In fact, it
suffices to show this for piecewise constant $\nu$:

\begin{Lemma}\label{lem:4}
  For any $\nu \in \cA^W$ and any $\epsilon>0$, there is
  $\tilde{\nu} \in \cA^W$ of the simple form
\begin{equation}
  \label{eq:401}
  \tilde{\nu}_t = \sum_{j=0}^J \sigma
  \sqrt{\rho_j(S^{(\tilde{\nu})}_{t_0},\dots, S^{(\tilde{\nu})}_{t_j})} 1_{(t_j,t_{j+1}]}(t)
\end{equation}
for some times $0=t_0<t_1<\dots<t_J=1$ and continuous bounded
functions $\rho_j:\RR^{j+1} \to [1,\infty)$ such that
\[
\left|\E^W\left[f(S^{(\nu)}_1)+\kappa \int_0^1 g\left(\nu_t^2/\sigma^2\right)\,dt\right]-\E^W\left[f(S^{(\tilde{\nu})}_1)+\kappa \int_0^1 g\left(\tilde{\nu}_t^2/\sigma^2\right)\,dt\right]\right|<\epsilon.
\]
\end{Lemma}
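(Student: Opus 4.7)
The plan is to build $\tilde{\nu}$ of the form~\eqref{eq:401} by three successive approximations: first a piecewise-constant-in-time discretization of $\nu$ on a fine deterministic grid; second, a continuous functional approximation of each grid-adapted coefficient on Wiener space; third, a change of variables turning dependence on $W$ at the grid points into dependence on $S^{(\tilde\nu)}$. Since $\nu \in \cA^W$, choose $N$ with $\sigma \leq \nu \leq N$ $\P^W$-a.s., and for a partition $\cbr{0=t_0<t_1<\dots<t_J=1}$ of mesh $\delta$ set
\[
\nu^{(J)}_t \set \sigma \vee \frac{1}{t_j-t_{j-1}}\int_{t_{j-1}}^{t_j}\nu_s\,ds, \quad t \in (t_j,t_{j+1}],
\]
with $\nu^{(J)}_t \set \sigma$ on $[0,t_1]$. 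This is $\cF^W_{t_j}$-measurable with values in $[\sigma,N]$, and by Lebesgue's differentiation theorem together with dominated convergence, $\nu^{(J)} \to \nu$ in $L^2(dt\otimes d\P^W)$ as $\delta \downarrow 0$.

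For the second layer, the random variable $V_j \set \nu^{(J)}_t\big|_{(t_j,t_{j+1}]}$ is $\cF^W_{t_j}$-measurable, and $\cF^W_{t_j}$ is generated by $(W_s)_{s\leq t_j}$. By martingale convergence, $\E^W[V_j\,|\,\sigma(W_{t_0},\dots,W_{t_j})]\to V_j$ in $L^2$ as the grid becomes dense in $[0,t_j]$; refining the partition further if necessary, I may therefore assume $V_j=\Phi_j(W_{t_0},\dots,W_{t_j})$ for some Borel $\Phi_j$ up to arbitrary $L^2$-precision. Density of $C_b(\RR^{j+1})$ in $L^2$ of the Gaussian law of $(W_{t_0},\dots,W_{t_j})$, together with a truncation to $[\sigma,N]$, then allows me to replace $\Phi_j$ by a continuous bounded $\tilde\Phi_j:\RR^{j+1}\to[\sigma,N]$ without losing $L^2$-precision.

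For the third layer, I construct $\tilde\nu$ of the form~\eqref{eq:401} recursively. Suppose $\rho_0,\dots,\rho_{j-1}$ with values in $[1,(N/\sigma)^2]$ have been built, so that $\tilde\nu$ on $[0,t_j]$ and the vector $(S^{(\tilde\nu)}_{t_0},\dots,S^{(\tilde\nu)}_{t_j})$ are determined. Because
\[
S^{(\tilde\nu)}_{t_{k+1}}=S^{(\tilde\nu)}_{t_k}\exp\!\rbr{\sigma\sqrt{\rho_k}(W_{t_{k+1}}-W_{t_k})-\tfrac{1}{2}\sigma^2\rho_k(t_{k+1}-t_k)}
\]
and $\rho_k\geq 1$, the map $(W_{t_1},\dots,W_{t_j})\mapsto(S^{(\tilde\nu)}_{t_1},\dots,S^{(\tilde\nu)}_{t_j})$ is a continuous bijection with continuous inverse. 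Composing $\tilde\Phi_j$ with this inverse yields a continuous bounded $\rho_j:\RR_+^{j+1}\to[1,(N/\sigma)^2]$ such that $\sigma\sqrt{\rho_j(S^{(\tilde\nu)}_{t_0},\dots,S^{(\tilde\nu)}_{t_j})}$ coincides with the prescribed volatility on $(t_j,t_{j+1}]$, giving~\eqref{eq:401}. By construction, the distribution of $\tilde\nu$ is the same as that of the process obtained by substituting $W$-increments directly into the $\tilde\Phi_j$'s.

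The main obstacle is the final quantitative comparison. The penalty term is handled directly: since $g$ is continuous and bounded on $[1,(N/\sigma)^2]$ and $\tilde\nu\to\nu$ in $L^2(dt\otimes d\P^W)$, dominated convergence yields $\E^W[\int_0^1 g(\tilde\nu_t^2/\sigma^2)\,dt]\to\E^W[\int_0^1 g(\nu_t^2/\sigma^2)\,dt]$. The hedging term is more delicate: for $\nu,\nu'\in\cA^W$ with $\sigma\leq\nu,\nu'\leq N$, I would use the Burkholder-Davis-Gundy inequality applied to the log-price difference, followed by Gronwall, to bound $\|S^{(\nu)}_1-S^{(\nu')}_1\|_{L^p}$ by $C(N,p)\|\nu-\nu'\|_{L^2(dt\otimes d\P^W)}$ for any $p\geq 1$, while the polynomial growth of $f$ together with the uniform moment bounds on $S^{(\nu)}_1$ (available because $\nu\leq N$) lets Cauchy-Schwarz convert this into $\E^W[\abs{f(S^{(\nu)}_1)-f(S^{(\tilde\nu)}_1)}]\to 0$. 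Choosing first a sufficiently fine grid, then sufficiently accurate continuous approximants $\tilde\Phi_j$, makes all three layers contribute less than $\epsilon$ in total.
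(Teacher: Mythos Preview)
Your proof is correct and follows essentially the same strategy as the paper's: approximate $\nu$ by a simple process of the feedback form~\eqref{eq:401}, then pass to the limit in the functional. The paper compresses the density step into a citation of Lemma~3.4 in~\cite{BDP} and closes via $\P^W\otimes dt$-a.e.\ convergence together with uniform integrability, whereas you spell out the three-layer construction (time discretization, cylindrical approximation, change of variables to $S^{(\tilde\nu)}$-coordinates) and close via an $L^2$-stability estimate for the SDE; these are interchangeable technical choices, and your recursive bijection argument for the third layer is exactly what underlies the cited density lemma.
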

\begin{proof}
  Let $\nu \in \cA^W$ and let $C$ be a constant such that $\nu\leq C$
  a.s. Using similar density arguments as in Lemma 3.4 in \cite{BDP}
  (for $d=1$) we get that there exists a sequence $\nu^{(n)}$,
  $n=1,2,\dots$, such that $\nu^{(n)}\leq C$ is of the simple form
  given by~\eqref{eq:401} and $\nu^{(n)}\rightarrow\nu$ $\P^W\otimes dt$-a.e. This
  together with the uniform integrability (due to $\nu^{(n)} \leq C$)
  of the sequence
  $f(S^{(\nu^{(n)})}_1)+\kappa \int_0^1 g((\nu^{(n)}_t)^2/\sigma^2)\,dt$,
  $n=1,2,\dots$, implies the assertion.
\end{proof}
In the proof of~\eqref{eq:423} we can assume without loss of
generality that $\lim_n \VV^{\kappa/n}(f(S_1^{(n)}))$ exists.  The
duality result in Lemma~\ref{lem:1} suggests to construct a sequence
of stopping systems $\TT^{(n)} \in \cT^{(n)}$ with respect to
$(\cF^{(n)}_t)$, $n=1,2,\dots$, such that under the associated
measures $\QQ^{(n)} \set \QQ(\TT^{(n)})$ the processes $S^{(n)}$
of~\eqref{eq:112} converge in law to $S^{(\nu)}$. This will be done
next.

To fix ideas, let us first focus on the initial period
$[t_0,t_1)=[0,t_1)$ where we wish  to obtain the constant
$\nu^2_0=\sigma^2 \rho_0 \in [0,\infty)$ as the limiting local
variance. Inspection of the argument in the previous section suggests
that for $\rho_0 \in \cbr{1,2,\dots}$ this can be accomplished by
stopping any $\rho_0$ consecutive upwards or downwards steps (and not
stop before the end in scenarios without this monotonicity property). For $\nu^2_0$
between natural multiples of $\sigma^2$, though, we have to mix
stopping after $[\rho_0]$ steps and after $[\rho_0]+1$ steps in just
the right proportions.  For instance, if we want to obtain
asymptotically the local variance $1.5 \sigma^2$ (i.e. $\rho_0=1.5$),
we just alternate between stopping after $[\rho_0]=1$ steps and after
$[\rho_0]+1=2$ steps in the same direction (and again do not stop before
the end in all scenarios which are incompatible with this).

In general, the following construction will work: For $j=0,\dots,J$,
we subdivide the time interval $[[n t_j]/n,[nt_{j+1}]/n) $ into
$[nt_{j+1}]-[nt_j]\approx n(t_{j+1}-t_j)=O(n)$ periods of length
$1/n$. These $O(n)$ periods can be covered by
$\sqrt{n(t_{j+1}-t_j)}=O(\sqrt{n})$ blocks of the same number
$\sqrt{n(t_{j+1}-t_j)}= O(\sqrt{n})$ of successive time points. Denote by
$$\rho^{(n)}_j \set \rho_j(S^{(n)}_{[n t_0]/n}, \dots, S^{(n)}_{[n
  t_j]/n})$$ a proxy for the multiple of $\sigma^2$ we want to
implement asymptotically as local variance over the interval
$[t_j,t_{j+1})$. Take $\lambda^{(n)}_j$ to be the unique solution
$\lambda \in (0,1]$ of
$$
 \rho^{(n)}_j = \lambda [\rho^{(n)}_j]+(1-\lambda)([\rho^{(n)}_j]+1).
$$
In each of the above $O(\sqrt{n})$ blocks of length $O(\sqrt{n})$, we
will first stop every time after $[\rho^{(n)}_j]$ steps have been made
by the binomial model consecutively in the same direction (i.e. all
upwards or all downwards) and we will not stop at all before reaching
the time horizon $T=1$ in scenarios where different directions are
taken in this period. This continues until we have covered a fraction
of $\lambda^{(n)}_j$ of the present block's $O(\sqrt{n})$ periods. For
the remaining fraction $1-\lambda^{(n)}_j$ of periods in this block,
we will proceed similarly but with a rhythm of stopping every
$[\rho^{(n)}_j]+1$ steps instead of $[\rho^{(n)}_j]$. After that we
repeat this procedure for all of the $O(\sqrt{n})$ blocks we separated
the interval $[[n t_j]/n,[nt_{j+1}]/n)$ into in the beginning. Then we
proceed similarly with the next interval
$[[n t_{j+1}]/n,[nt_{j+2}]/n)$ until all of these intervals are
treated.

Let us next analyze the asymptotic transaction costs and variance
which this procedure entails. We can do this separately on each of the
intervals $[t_j,t_{j+1})$, $j=0,\dots,,J$. So fix such a $j$ and let
$n_1$ and $n_2$ denote the number of times where we stop every
$[\rho^{(n)}_j]$ and $[\rho^{(n)}_j]+1$ binomial steps,
respectively. Then we have
$$
n_1 [\rho^{(n)}_j]+n_2([\rho^{(n)}_j]+1)=\sqrt{n(t_{j+1}-t_j)}+O(1)
$$
and, in order to obtain the right asymptotic variance for $M^{(n)}$
constructed from the thus obtained $
\tau^{(n)}_j$s as in~\eqref{eq:301}, we want to have at the same time that
 $$
n_1 [\rho^{(n)}_j]^2+n_2([\rho^{(n)}_j]+1)^2 =\rho^{(n)}_j \sqrt{n(t_{j+1}-t_j)}+O(1).
$$
We conclude
\begin{align*}
\frac{n_1}{\sqrt{n(t_{j+1}-t_j)}}&=\frac{1+[\rho^{(n)}_j]-\rho^{(n)}_j}{[\rho^{(n)}_j]}+O(1/\sqrt{n}),
\\
\frac{n_2}{\sqrt{n(t_{j+1}-t_j)}}&=\frac{\rho^{(n)}_j-[\rho^{(n)}_j]}{1+[\rho^{(n)}_j]}+O(1/\sqrt{n}),
\end{align*}
and the fraction of periods covered in $[\rho^{(n)}_j]$ steps, respectively, is the desired
$$
\lambda^{(n)}_j = \frac{n_1[\rho^{(n)}_j]}{\sqrt{n(t_{j+1}-t_j)}}+O(1/\sqrt{n})=
1+[\rho^{(n)}_j]-\rho^{(n)}_j+O(1/\sqrt{n})
$$
The transaction costs on this block are equal to $ (n_1+n_2) \kappa/n$,
and so we conclude that the transaction costs on the whole interval
$[[n t_j]/n,[nt_{j+1}]/n)$ amount to
\begin{equation}\label{eq:901}
\sqrt{n(t_{j+1}-t_j)}(n_1+n_2)\kappa/n=\kappa (t_{j+1}-t_j)g(\rho^{(n)}_j)+O(1/\sqrt{n}).
\end{equation}
Furthermore, we get that for any $t\in (t_j,t_{j+1})$ the process
$\alpha^{(n)}$ as in~\eqref{eq:300} satisfies
\begin{eqnarray}\label{estimate}
&\int_{t_j}^t\alpha^{(n)}_s ds=O(1/\sqrt n)+\frac{t-t_j}{\sqrt{n(t_{j+1}-t_j)}}\left(0+1+2+...+[\rho^{(n)}_j]-1\right)n_1\\
&+\frac{t-t_j}{\sqrt{n(t_{j+1}-t_j)}}\left(0+1+2+...+[\rho^{(n)}_j]\right)n_2
=O(1/\sqrt n)+\frac{\rho^{(n)}_j-1}{2}(t-t_j).\nonumber
\end{eqnarray}
Having constructed for $n=1,2,\dots$ a system of stopping times
$\TT^{(n)} = \cbr{\tau^{(n)}_k}\in \cT^{(n)}$, we can let
$\QQ^{(n)} \set \QQ(\TT^{(n)})$ denote the associated martingale
measure for $(S^{(n)}_{\tau^{(n)}_k})_{k=0,1,\dots}$. Observe that
along with the functions $\rho_j$ also the $\rho_j^{(n)}$ are bounded
uniformly, say by a constant $m \in \cbr{1,2,\dots}$. As a
consequence, the increments between any two successive intervention
times are bounded by $m/n$ $\QQ^{(n)}$-almost surely as
in~\eqref{eq:20}. We can thus invoke Lemma~\ref{lem:3} to conclude
that, possibly along a subsequence again denoted by $n$, we have the
weak convergence
\[
  \Law(S^{(n)}, \int_0^\cdot \alpha^{(n)}_{s} \,ds \;|\;
  \QQ^{(n)}) \to
  \Law(M,\int_0^\cdot \frac{1}{2}\left(\hat{\nu}^2_t/\sigma^2-1\right)\,dt
  \;|\;\hat{\P}), \quad n \uparrow \infty,
\]
on $\DD[0,1] \times \DD[0,1]$ for some $\hat{\nu} \geq \sigma$ with
$\hat{\nu}^2=d\langle L\rangle/dt$ for the stochastic logarithm $L$ of
$M=s_0 \cE(L)$.
In fact, $M$ and $\hat{\nu}$ are just copies, respectively, of our
original $S^{(\nu)}$ and $\nu$; see Lemma~\ref{lem:6} below.

Just as after Lemma~\ref{lem:3} in the previous section, we now use
Skorohod's representation theorem to see that without loss of
generality we can assume to have $\hat{S}^{(n)}$, $\hat{M}^{(n)}$ and
$\hat{\alpha}^{(n)}$ on $(\hat{\Omega},\hat{\cF},\hat{\P})$ which have,
for each $n=1,2,\dots$, the same joint law as their counterparts
$(S^{(n)}, M^{(n)}, \alpha^{(n)})$ under $\QQ^{(n)}$ and which are
such that, as $n \uparrow \infty$,
\begin{equation}
  \label{eq:601}
  (\hat{S}^{(n)}, \hat{M}^{(n)}, \int_0^\cdot
\hat{\alpha}^{(n)}_u\,du) \to (M,M, \int_0^\cdot \frac{1}{2}(\hat{\nu}^2_t/\sigma^2-1)\,dt)
\end{equation}
uniformly in time $\hat{\P}$-almost surely, where
$\hat{\nu}^2 \set d\langle L\rangle/dt$ for the stochastic logarithm $L$
of $M$. For $n=1,2,\dots$, we can also reconstruct from
$\hat{S}^{(n)}$ a system of stopping times $\hat{\TT}^{(n)}$ for the
filtration generated by $\hat{S}^{(n)}$ which corresponds to our
$\TT^{(n)}$ constructed above.
From (\ref{estimate})--(\ref{eq:601}) and the fact that
the functions $\rho_j$, $j=1,...,J$ are continuous it follows that
\begin{equation}\label{specialform}
 \hat{\nu}_t = \sum_{j=0}^J \sigma
  \sqrt{\rho_j(M_{t_0},\dots, M_{t_j})} 1_{(t_j,t_{j+1}]}(t) \quad \P
  \otimes dt\text{-a.e.}
\end{equation}

The proof of~\eqref{eq:423} is now completed by arguing that
\begin{align*}
    \VV^{\kappa/n}(f(S^{(n)}_1) &\leq
                                  \E_{\QQ^{(n)}}[f(S^{(n)}_1)+\frac{\kappa}{n}
                                  N(\TT^{(n)})]\\
& =  \E_{\hat{\P}}[f(\hat{S}^{(n)}_1)]+\E_{\hat{\P}}[\kappa N(\hat{\TT}^{(n)})/n]\\
&\to \E_{\hat{\P}}[f(M_1)]+\E_{\hat{\P}}[\kappa\int_0^1 g(\hat{\nu}_t^2/\sigma^2)\,dt]\\
&=\E^W\left[f(S^{(\nu)}_1)+\kappa \int_0^1 g(\nu_t^2/\sigma^2)\,dt\right].
\end{align*}
Here the estimate in the first line is immediate from
Lemma~\ref{lem:1} and the first identity is due to our Skorohod
representation. The convergence
$\E_{\hat{\P}}[f(\hat{S}^{(n)}_1)] \to \E_{\hat{\P}}[f(M_1)]$ is due
to dominated convergence since uniform integrability follows from the
polynomial growth of $f$ and~\eqref{eq:500}; the convergence of the
other expectations also follows by dominated convergence since
$N(\hat{\TT}^{(n)})/n \in [0,1]$, $n=1,2,\dots$, and
since~\eqref{eq:901} in conjunction with~\eqref{specialform} yields
$\hat{\P}$-a.s. convergence of the costs $\kappa N(\hat{\TT}^{(n)})/n$
to $\kappa\int_0^1g(\hat{\nu}_t^2/\sigma^2)\,dt$. The final identity
is immediate from Lemma~\ref{lem:6} below.

\begin{Lemma}\label{lem:6} We have
\begin{equation}
  \label{eq:502}
  \Law(S^{(\nu)} \;|\; \P^W) = \Law(M \;|\; \hat{\P}).
\end{equation}
\end{Lemma}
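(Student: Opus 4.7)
\textbf{Proof plan for Lemma~\ref{lem:6}.} The strategy is to show that both $S^{(\nu)}$ under $\P^W$ and $M$ under $\hat\P$ are weak solutions of the \emph{same} path-dependent SDE with piecewise constant, scenario-dependent volatility coefficients, and then to establish weak uniqueness for that SDE by an induction over the partition points $0=t_0<t_1<\cdots<t_J=1$.

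First, I would rewrite $M$ as a stochastic exponential driven by a Brownian motion on $\hat\Omega$ (possibly after enlarging the space). Since $M=s_0\cE(L)$ with $d\langle L\rangle_t=\hat\nu_t^2\,dt$ and $\hat\nu_t\geq \sigma>0$, the process $B_t\set\int_0^t \hat\nu_u^{-1}\,dL_u$ is well defined and, by Lévy's characterization, a $\hat\P$-Brownian motion; moreover, $dM_t=M_t\,\hat\nu_t\,dB_t$. Substituting the representation~\eqref{specialform} gives
\begin{equation*}
dM_t \;=\; M_t\sum_{j=0}^{J}\sigma\sqrt{\rho_j(M_{t_0},\dots,M_{t_j})}\,\idvec_{(t_j,t_{j+1}]}(t)\,dB_t,\quad M_0=s_0.
\end{equation*}
By the very definition of $S^{(\nu)}$ for $\nu$ of the form~\eqref{eq:401}, the same identity holds under $\P^W$ with $M$, $B$, $\hat\P$ replaced by $S^{(\nu)}$, $W$, $\P^W$.

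The core step is then an induction on $j=0,\dots,J$ showing that the restrictions $(M_t)_{t\in[0,t_{j+1}]}$ and $(S^{(\nu)}_t)_{t\in[0,t_{j+1}]}$ have the same law. The base case is trivial since $M_0=s_0=S^{(\nu)}_0$. For the inductive step, note that on $(t_j,t_{j+1}]$ the coefficient $\kappa_j\set\sigma\sqrt{\rho_j(M_{t_0},\dots,M_{t_j})}$ is $\cF^M_{t_j}$-measurable and constant in time; hence, conditionally on $\cF^M_{t_j}$,
\begin{equation*}
M_t \;=\; M_{t_j}\exp\bigl(\kappa_j(B_t-B_{t_j})-\tfrac{1}{2}\kappa_j^2(t-t_j)\bigr),\qquad t\in[t_j,t_{j+1}],
\end{equation*}
so the regular conditional distribution of $(M_t)_{t\in[t_j,t_{j+1}]}$ given $\cF^M_{t_j}$ is that of a geometric Brownian motion started at $M_{t_j}$ with constant volatility $\kappa_j$, a distribution that depends on the past only through $M_{t_0},\dots,M_{t_j}$. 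The same description applies to $S^{(\nu)}$ with $\kappa_j$ replaced by $\sigma\sqrt{\rho_j(S^{(\nu)}_{t_0},\dots,S^{(\nu)}_{t_j})}$. Combined with the inductive hypothesis on $[0,t_j]$, this forces the joint laws on $[0,t_{j+1}]$ to coincide; taking $j=J$ yields~\eqref{eq:502}.

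The only delicate point is the first step, namely producing the driving Brownian motion $B$ and identifying the SDE satisfied by $M$; once this is in place, continuity of the $\rho_j$ makes the conditional law above a measurable function of $(M_{t_0},\dots,M_{t_j})$, and the inductive propagation of equality in law is routine via regular conditional distributions on the Skorohod (here actually Wiener) path space.
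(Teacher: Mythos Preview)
Your proposal is correct and follows essentially the same route as the paper's proof: both use L\'evy's characterization to produce a driving Brownian motion from the stochastic logarithm $L$ of $M$ (which is possible since $\hat\nu\geq\sigma>0$), write $M$ on each $[t_j,t_{j+1}]$ as an explicit geometric Brownian motion with the $\cF^M_{t_j}$-measurable volatility $\sigma\sqrt{\rho_j(M_{t_0},\dots,M_{t_j})}$, and then propagate equality in law with $S^{(\nu)}$ by induction on $j$. The only cosmetic difference is that you build a single global Brownian motion $B_t=\int_0^t\hat\nu_u^{-1}\,dL_u$ and use its increments, whereas the paper defines a fresh Brownian motion on each subinterval; since $\hat\nu$ is constant on $(t_j,t_{j+1}]$ these constructions coincide, and your remark about ``possibly enlarging the space'' is in fact unnecessary here.
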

\begin{proof}
Let us prove by induction that, for any
$j=0,1,...,J$, the distribution of $M|_{[0,t_j]}$ is equal to the
distribution of $S^{(\nu)}|_{[0,t_j]}$.  For $j=0$ the statement is
trivial. Assume that the statement is correct for $j$.  Define the
stochastic process
$$B_t=\frac{1}{\sigma \sqrt{\rho_j(M_{t_0},\dots, M_{t_j})}}\int_{t_j}^{t+t_j}\frac{d M_u}{M_u},\ \ t\in[0,t_{j+1}-t_j].$$
From the Levy Theorem and
(\ref{specialform}) it follows that $B$ is
a Brownian motion on $[0,t_{j+1}-t_j]$ independent of $M|_{[0,t_j]}$. Clearly, for
$t\in [t_j, t_{j+1}]$,
\begin{equation}\label{5.mar1}
M_t=M_{t_j}\exp\left(\sigma \sqrt{\rho_j(M_{t_0},..,M_{t_j})}B_{t-t_j}-
\sigma^2\rho_j(M_{t_0},..,M_{t_j})(t-t_j)/2\right).
\end{equation}
On the other hand, for $t\in [t_j, t_{j+1}]$,
\begin{equation}\label{5.mar2}
S^{(\nu)}_{t}=S^{(\nu)}_{t_j}\exp\left(\sigma \sqrt{\rho_j(S^{(\nu)}_{t_0},..,S^{(\nu)}_{t_j})}\hat B_{t-t_j}-
\sigma^2\rho_j(S^{(\nu)}_{t_0},..,S^{(\nu)}_{t_j})(t-t_j)/2\right)
\end{equation}
where $\hat B_t=W_{t+t_j}-W_{t_j}$, $t\geq 0$ is a Brownian motion
independent of $S^{(\nu)}|_{[0,t_j]}$.
From~\eqref{5.mar1}--\eqref{5.mar2} and the induction assumption we
get that the distribution of $M|_{[0,t_{j+1}]}$ coincides with the
distribution of $S^{(\nu)}|_{[0,t_{j+1}]}$ as required. Hence, the
distribution of $M$ is the same as that of $S^{(\nu)}$.
\end{proof}


\bibliographystyle{spbasic}

\end{document}